\definecolor{darkgreen}{rgb}{0,0.5,0}
\newtheorem{theorem}{Theorem}[section]
\newtheorem{proposition}[theorem]{Proposition}
\newtheorem{lemma}[theorem]{Lemma}
\newtheorem{corollary}[theorem]{Corollary}
\newtheorem{conjecture}[theorem]{Conjecture}
\newcommand{\expref}[2]{{\texorpdfstring{\hyperref[#2]{#1~\ref{#2}}}{#1~\ref{#2}}}} 
\newcommand{\secref}[1]{\expref{Section}{#1}}
\newcommand{\thmref}[1]{\expref{Theorem}{#1}}
\newcommand{\appref}[1]{\expref{Appendix}{#1}}
\newcommand{\lref}[1]{\expref{Lemma}{#1}}
\newcommand{\corref}[1]{\expref{Corollary}{#1}}
\newcommand{\conjref}[1]{\expref{Conjecture}{#1}}
\newcommand{\pref}[1]{\expref{Proposition}{#1}}
\newcommand{\figref}[1]{\expref{Figure}{#1}}
\theoremstyle{plain}
\theoremstyle{plain}
\theoremstyle{remark}
\renewcommand{\inf}{\mathrm{Inf}}
\newcommand{\Inf}{\mathrm{Inf}}
\newcommand{\maxinf}{\mathrm{MaxInf}}
\newcommand{\poly}{\mathrm{poly}}
\newcommand{\ui}{{\boldsymbol{i}}}
\newcommand{\eps}{\epsilon}
\providecommand{\lemmaname}{Lemma}
\providecommand{\remarkname}{Remark}
\providecommand{\theoremname}{Theorem}
\newcommand{\ba}{{{a}}}
\newcommand{\R}{{\mathbb R}}
\newcommand{\N}{\mathbb{N}}
\newcommand{\C}{\mathbb{C}}
\newcommand{\pmone}{\{\pm 1\}}
\newcommand{\cbnorm}[1]{\|#1\|_{\mathrm{cb}}}
\newcommand{\E}{\mathbb{E}}
\newcommand{\Ef}{\E f}
\newcommand{\BZ}{\mathbf{0}}
\newcommand{\Diag}{\mathrm{Diag}}
\newcommand{\beq}{\begin{equation}}
\newcommand{\eeq}{\end{equation}}
\newcommand{\Var}{\mathrm{Var}}
\newcommand{\CP}{\mathcal{P}}
\newcommand{\NC}{\mathcal{NC}}
\newcommand{\bi}{{\boldsymbol{i}}}
\newcommand{\bj}{{\boldsymbol{j}}}
\newcommand{\comment}[1]{{}}
\newcommand{\BU}{\boldsymbol{U}}
\newcommand{\U}{\boldsymbol{U}}
\newcommand{\BA}{\mathbf{A}}
\newcommand{\x}{\boldsymbol{x}}
\newcommand{\CA}{\mathcal A}
\newcommand{\bone}{\boldsymbol{1}}
\newcommand{\tr}{\mathrm{tr}}
\newcommand{\BR}{\R}
\newcommand{\ind}{\mathbf{1}}
\newcommand{\op}{\mathrm{op}}
\newcommand{\BE}{\E}
\newcommand{\fhat}{\widehat{f}}
\newcommand{\del}{\partial}
\newcommand{\cb}{\mathrm{cb}}
\newcommand{\hf}{\widehat{f}}
\newcommand{\bb}{\boldsymbol{b}}
\renewcommand{\phi}{\varphi}
\date{} 
\begin{document}

\title{Influence in Completely Bounded Block-multilinear Forms and Classical Simulation of Quantum Algorithms}
\author{ Nikhil Bansal\thanks{\texttt{bansaln@umich.edu}. Supported in part by the NWO VICI grant 639.023.812.}\\ \small University of Michigan \and
 Makrand Sinha\thanks{\texttt{makrand@berkeley.edu}. Supported by a Simons-Berkeley postdoctoral fellowship.}\\ \small Simons Institute and UC Berkeley \and 
 Ronald de Wolf\thanks{\texttt{rdewolf@cwi.nl}. Partially supported by the Dutch Research Council (NWO/OCW), as part of the Quantum Software Consortium programme (project number 024.003.037), and through QuantERA ERA-NET Cofund project QuantAlgo (680-91-034).}\\ \small{QuSoft, CWI and U.\ of Amsterdam}
}

\maketitle
\begin{abstract}
The Aaronson-Ambainis conjecture (Theory of Computing '14) says that every low-degree bounded polynomial on the Boolean hypercube has an influential variable. This conjecture, if true, would imply that the acceptance probability of every $d$-query quantum algorithm can be well-approximated almost everywhere (i.e., on almost all inputs) by a $\poly(d)$-query classical algorithm.
We prove a special case of the conjecture: in every completely bounded  degree-$d$ block-multilinear form with constant variance, there always exists a variable with influence at least $1/\poly(d)$. In a certain sense, such polynomials characterize the acceptance probability of quantum query algorithms, as shown by Arunachalam, Bri\"et and Palazuelos (SICOMP '19). As a corollary we obtain efficient classical almost-everywhere simulation for a particular class of quantum algorithms that includes for instance $k$-fold Forrelation.
Our main technical result relies on connections to free probability theory.
\end{abstract}

\thispagestyle{empty}
\clearpage
\newpage
\setcounter{page}{1}

\allowdisplaybreaks

\section{Introduction}

This paper is motivated by quantum query complexity and its relation to classical query complexity. Query complexity has been the context in which many of the main quantum algorithms have been developed, including Shor's \cite{Shor97} (building on \cite{Simon97}) and Grover's \cite{G96}. It has the added advantage that we actually know how to prove good lower bounds on query complexity, in contrast to a setting like circuit complexity. 

Quantum query complexity is closely connected to the study of bounded polynomials (or forms) on the Boolean hypercube. The key to this connection is that the amplitudes of the final state of a $d$-query quantum algorithm are polynomials of degree at most $d$ in the bits of the input~$x$, and therefore its acceptance probability $p(x)$ is a polynomial of degree at most $2d$. This observation was made by Beals, Buhrman, Cleve, Mosca and de Wolf~\cite{BBCMW01}, who used it to show that the bounded-error quantum query complexity and classical query complexity are polynomially related for any \emph{total} Boolean function. Since then a long line of research~\cite{ambainis:degreevsqueryj, ABK16, BHT17, ABBLSS17, T20, ABKRT20, BS21, SSW21} has tried to pinpoint the exact polynomial dependence as well as studied the relationship with other measures of complexity of a Boolean function (e.g., sensitivity, certificate complexity, and others~\cite{nisan&szegedy:degree,BdW02,BSS03}).

On the other hand, quantum algorithms can offer a huge (superexponential) advantage for partial functions, which are only defined on a subset of the Boolean hypercube; there are many known examples of partial functions whose classical query complexity is much larger than their quantum query complexity, for instance $k$-fold Forrelation and its variants \cite{AA:forrelation, T20, BS21, SSW21}. This means that the acceptance probability~$p(x)$ of a quantum algorithm cannot always be efficiently approximated  by a classical algorithm, since otherwise quantum algorithms could offer only a polynomial speedup for any function, be it total or partial.

However, we can set our sights lower, and ask whether it is possible to classically efficiently approximate $p(x)$ on \emph{almost all} inputs. 
The following conjecture, which first appeared in \cite{AA14} and is  attributed there to folklore, says that we can.

\begin{conjecture}[Folklore]\label{conj:folklore}
     The acceptance probability of any $d$-query quantum algorithm on $n$-bit inputs can be estimated up to additive error $\eps$ on a $1-\delta$ fraction of the inputs by a classical query algorithm making $\poly(d, 1/\epsilon, 1/\delta)$ queries. 
\end{conjecture}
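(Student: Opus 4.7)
The plan is to reduce \conjref{conj:folklore} to the Aaronson--Ambainis conjecture on influences, and then convert an influence lower bound into an adaptive classical decision tree. By \cite{BBCMW01}, the acceptance probability $p(x)$ of a $d$-query quantum algorithm is a real multilinear polynomial of degree at most $2d$ taking values in $[0,1]$, so the task reduces to $\eps$-approximating such a $p$ on a $(1-\delta)$ fraction of inputs using $\poly(d,1/\eps,1/\delta)$ classical queries.

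The classical algorithm is a greedy decision tree. At each internal node, which corresponds to a restriction $\rho$ of the bits queried so far, we find a coordinate $i$ with $\Inf_i(p|_\rho) \geq 1/\poly(d)$, query $x_i$, and recurse into the corresponding subcube. By the standard identity $\E_{x_i}[\Var(p|_{\rho, x_i})] = \Var(p|_\rho) - \Inf_i(p|_\rho)$, which is just the law of total variance applied to a single Boolean variable, each such query decreases the expected variance of the current restriction by at least $1/\poly(d)$. Assuming the Aaronson--Ambainis conjecture, after $T = \poly(d,1/\eps,1/\delta)$ queries the expected squared deviation between $p(x)$ and the conditional mean $\E[p|_\rho]$ at the leaf reached by $x$ is at most $\eps^2 \delta$; a single application of Markov's inequality to this squared deviation then shows that outputting $\E[p|_\rho]$ is within additive error $\eps$ of $p(x)$ on at least a $(1-\delta)$ fraction of inputs. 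The numerical quantities $\E[p|_\rho]$ and $\Inf_i(p|_\rho)$ needed to drive the tree depend only on the Fourier expansion of $p$, which is fixed by the description of the quantum algorithm; they can be computed offline and do not require further queries to $x$.

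The principal obstacle is precisely the Aaronson--Ambainis conjecture itself: showing that every bounded degree-$d$ multilinear polynomial on $\{0,1\}^n$ with non-trivial variance has a variable of influence at least $1/\poly(d)$. The original work of \cite{AA14} establishes only the weaker $1/\poly(d,\log n)$ bound via random restrictions, and removing the $\log n$ factor in full generality is open. The strategy adopted in this paper is therefore to restrict attention to the subclass of polynomials that actually arise as quantum acceptance probabilities; by the work of Arunachalam, Bri\"et and Palazuelos these are essentially captured by completely bounded block-multilinear forms, and one can then exploit this additional structure, together with tools from free probability, to establish the desired $1/\poly(d)$ influence bound on that subclass and hence obtain \conjref{conj:folklore} for the corresponding class of quantum algorithms.
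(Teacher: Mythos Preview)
The statement in question is a \emph{conjecture}; the paper does not prove it, so there is no ``paper's own proof'' to compare against. Your first two paragraphs correctly recount the standard reduction from \cite{AA14}: the acceptance probability is a bounded degree-$2d$ polynomial, and assuming \conjref{conj:aa-inf} one builds a greedy decision tree that repeatedly queries the most influential variable. That reduction is exactly what the paper invokes (see \secref{sec:proof} and the discussion around \corref{cor:sim}), so up to that point you are in line with the paper.

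Your final paragraph, however, contains a real gap. You assert that since quantum acceptance probabilities ``are essentially captured by completely bounded block-multilinear forms'' (via \cite{ABP18}), the paper's influence bound on such forms yields \conjref{conj:folklore} for the corresponding class of quantum algorithms. The paper explicitly explains in \secref{sec:open} why this does \emph{not} follow. The ABP characterization sends a $d$-query algorithm on $z\in\pmone^m$ to a completely bounded block-multilinear form $F$ on $\pmone^{(m+1)\times 2d}$, and recovers the acceptance probability only on the ``diagonal'' inputs $\x_b=(z,1)$ for all $b$. Those diagonal inputs are an exponentially small fraction of the blown-up domain, so an almost-everywhere approximation of $F$ on $\pmone^{(m+1)\times 2d}$ says nothing about an almost-everywhere approximation of the acceptance probability on $\pmone^m$. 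Consequently, the paper's results establish the conclusion of \conjref{conj:folklore} only for the restricted class of algorithms in \figref{fig:quantum} (where the input genuinely comes in $d$ independent blocks, each queried once), not for arbitrary $d$-query quantum algorithms. Your proposal conflates these two settings; as the paper itself notes, bridging that gap remains open.
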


This conjecture is one expression of the general idea that quantum computers can only give significant speedup (in terms of queries, circuit complexity, or other things) on very structured problems, i.e., when the input to the problem has a particular structure, for instance some periodicity or specific correlations between different parts of the input. For generic unstructured inputs, the conjecture says that only a limited quantum speedup can be expected.
This conjecture motivates and is implied by the following conjecture due to Aaronson and Ambainis~\cite{AA14}:

\begin{conjecture}[Aaronson-Ambainis conjecture]\label{conj:aa-inf}
     Let $f: \pmone^n \to [0,1]$ be a degree-$d$ multilinear polynomial. Then, the maximum influence among all variables in $f$ is at least $\poly(\Var[f], 1/d)$.
\end{conjecture}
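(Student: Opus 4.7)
The plan is to work in the Fourier-analytic framework. With $f = \sum_S \hf(S) \chi_S$ where $\chi_S(x) = \prod_{i \in S} x_i$, one has $\Var[f] = \sum_{S \neq \emptyset} \hf(S)^2$, $\Inf_i[f] = \sum_{S \ni i} \hf(S)^2$, and total influence $\sum_S |S|\hf(S)^2 \le d\,\Var[f]$. Thus the average influence is at most $d\,\Var[f]/n$, which is far too weak when $n$ is large: the boundedness hypothesis $f(x) \in [0,1]$ must enter in an essential way, otherwise a degree-$d$ multilinear form with no range restriction can spread its variance uniformly across all coordinates.

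The first move would be a reduction to a block-multilinear form of degree $d$. By replacing each variable with a block of fresh copies (averaging over the choice of which copy is read), one obtains a polynomial whose boundedness survives in expectation, whose variance is essentially preserved, and whose coordinate influences control those of the original. Block-multilinear forms have the rigid homogeneous structure of picking one variable from each of $d$ blocks per monomial, which makes level-by-level Fourier arguments noticeably cleaner and is also the regime to which the present paper reduces.

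Next, I would combine two ingredients. First, locate a Fourier level $k \in \{1,\ldots,d\}$ carrying at least a $1/d$ fraction of the variance, so there is a degree-$k$ homogeneous slice of $f$ with $L^2$-mass $\gtrsim \Var[f]/d$. Second, apply a random restriction that keeps each variable alive independently with probability $\rho = 1/\poly(d)$. Hypercontractivity yields $\|f\|_4 \le 9^{d/2}\|f\|_2$, and boundedness gives $\|f\|_4 \le 1$; combining these, one hopes that for a typical restriction the surviving polynomial is still (approximately) bounded, retains a nontrivial fraction of the variance, and lives on so few variables that the trivial ``max influence $\ge$ variance $/\,n$'' bound finally becomes useful. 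An influential variable in the restricted polynomial then lifts back to an influential variable of $f$ via a standard convexity argument on conditional expectations.

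The principal obstacle is that hypercontractivity and $L^\infty$-boundedness interact badly in the polynomial regime: the factor $9^{d/2}$ loses exponentially in~$d$, whereas the conjecture demands only polynomial loss. This is exactly why the present paper restricts attention to \emph{completely bounded} block-multilinear forms, for which free-probabilistic substitutes supply a dimension-free replacement for hypercontractivity. Any proof of the full conjecture must therefore either (i)~iterate the Aaronson--Ambainis ``bounded polynomials without an influential variable are close to a junta'' phenomenon to whittle the effective variable set down to $\poly(d)$ size and then apply the average-influence bound, or (ii)~find a new functional inequality that bridges the sup-norm with effective dimension with only a $\poly(d)$ loss. Closing this gap between sup-norm boundedness and the completely bounded norm used here is the essential difficulty.
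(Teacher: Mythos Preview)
The statement you are attempting is \emph{Conjecture}~\ref{conj:aa-inf}, not a theorem; the paper does not prove it and explicitly labels it open. What the paper establishes is the special case where $f$ is a block-multilinear form with $\cbnorm{f}\le 1$ (Theorem~\ref{thm:aa}), using free-probability tools that have no analogue under the weaker hypothesis $\|f\|_\infty\le 1$.

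Your proposal is not a proof but a self-aware outline that correctly diagnoses its own failure. The random-restriction-plus-hypercontractivity route you sketch is essentially the mechanism behind the Dinur--Friedgut--Kindler--O'Donnell result the paper already cites, and it inevitably pays $c^d$ rather than $\poly(d)$: the $(4,2)$-hypercontractive constant $9^{d/2}$ (or the $2^{-d/2}$ in the multilinear Khintchine inequality) is the obstruction, and no amount of iteration or lifting of influential variables from restrictions recovers that loss. Your option~(i), iterating a junta-type structure theorem, is exactly what \cite{DFKO06} does and still yields only exponential dependence; your option~(ii) is a restatement of the open problem. The paper's Section~\ref{sec:bh} makes this explicit with the address-function example: for $\|f\|_\infty$ the factor $2^{d/2}$ in \eqref{eq:cbnormvsinf} is tight, so any argument that only uses sup-norm boundedness on the hypercube cannot do better. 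The genuinely new idea in the paper---substituting Haar-random unitaries and invoking the Kemp--Speicher bound to replace Khintchine's $2^{-d/2}$ by $(e(d+1))^{-1/2}$---requires the stronger completely bounded hypothesis and does not extend to the general conjecture.

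In short: there is no proof here to compare, because neither you nor the paper proves Conjecture~\ref{conj:aa-inf}; you have accurately identified why the standard toolkit falls short, and that diagnosis matches the paper's.
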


The above conjecture poses a fundamental structural question about bounded polynomials on the hypercube and is a notable open problem in the analysis of Boolean functions. \conjref{conj:aa-inf} is known to hold if the function is Boolean-valued (this follows from \cite{M05,OSS05}). For bounded polynomials, \cite{AA14} observed that the results of Dinur, Friedgut, Kindler and O'Donnell~\cite{DFKO06} imply that the conjecture holds with at least an exponential dependence in $d$. Montanaro \cite{M12} proved a special case of the conjecture for \emph{block-multilinear forms} where all coefficients have the same magnitude\footnote{This argument can be generalized to the case when $n^{\Omega(d)}$ coefficients have the same magnitude and the rest are zero, as noted in \cite{M12} where the observation is  attributed to Ambainis.}. Defant, Masty\l{}o and Perez \cite{DMP18} generalized this to bounded polynomials where all Fourier coefficients have the same magnitude and showed that the conjecture holds with an $\exp(\sqrt{d\log d})$ dependence. O'Donnell and Zhao \cite{OZ16} showed that it is sufficient to prove the conjecture for so-called \emph{one-block decoupled} polynomials.

In this work, our motivation is to study \conjref{conj:aa-inf} for polynomials that represent the acceptance probability of quantum algorithms. Such polynomials have a lot more structure --- as shown by  Arunachalam, Bri\"et and Palazuelos~\cite{ABP18}, they can be represented in terms of \emph{completely bounded block-multilinear forms} (as described in the next section) and conversely, such forms even characterize quantum algorithms in a certain sense (see \secref{sec:open}). As such here we focus on understanding influences in such polynomials.

\subsection{Our results}

A degree-$d$ block-multilinear form $f(\x_1, \ldots, \x_d)$ mapping $\pmone^{n \times d}$ to $\BR$ is a polynomial where the variables are partitioned into $d$ blocks of $n$ variables each, and each monomial contains at most one variable from each block. Formally, $\x_b = (x_b(1), \ldots, x_b(n)) \in \pmone^n$ constitute the $b^\text{th}$ block of variables and 
\begin{equation}\label{eqn:tensor}
    f(\x_1,\ldots, \x_d) =  \E f +  \sum_{m = 1}^d \sum_{\substack{|\bb|=m\\|\bi|=m}} \hf_{\bb, \bi} \cdot  ~x_{b_1}(i_1)x_{b_2}(i_2)\cdots x_{b_m}({i_m}),
\end{equation}
where the tuple $\bb  = (b_1, \ldots, b_m)$ satisfies $1 \le b_1 < \ldots < b_m \le d$ and $\bi \in [n]^m$ is an $m$-tuple. Note that $m$
 is determined from the size of the tuple $\bb$, so we just write $\hf_{\bb, \bi}$ above.  

 Since each non-constant monomial contains at most one variable from each block and the ordering of the blocks is fixed, a degree-$d$ block-multilinear form $f: \pmone^{n \times d} \to \BR$ can be naturally viewed as a non-commutative polynomial in matrix variables with the constant term replaced with $\E f$ times the identity. Denoting the non-commutative polynomial as $f(\U_1, \ldots, \U_d)$ where each $\U_b = (U_b(1), \ldots, U_b(n))$ is a block of non-commutative variables, the completely bounded norm\footnote{The completely bounded norm originates in the theory of operator algebras. In the literature, this norm is sometimes defined for homogeneous block-multilinear forms only, but here we extend the definition to non-homogeneous block-multilinear forms.}  $\|f\|_{\cb}$ of the form $f$ is defined as
\[ 
\|f\|_{\cb} = \sup\Bigg\{\|f(\U_1, \ldots, \U_d)\|_{\op} ~\Bigg|~  N \in \N, U_b(i) \in \C^{N \times N}, \|U_b(i)\|_{\op} \le 1, b \in[d], i\in [n]\Bigg\}.
\]

 The supremum above is always attained and can be computed by solving a semidefinite program as shown by Gribling and Laurent~\cite{GL19}. One can also equivalently restrict the supremum in the definition above to unitary matrices since the convex hull of the set of unitary matrices is the unit operator norm ball. Moreover, $\|T\|_{\infty} \le \|T\|_{\cb}$ where $\|T\|_{\infty} = \max_{x \in \pmone^{n \times d}} |T(x)|$, so forms that are completely bounded  are also bounded on the hypercube. 

Our main result is a proof of the Aaronson-Ambainis conjecture for block-multilinear forms that are completely bounded. To state our result, we recall that the influence of a variable $x_b(i)$ on $f$ is
\[ \Inf_{b,i}(f) = \BE\left| \partial_{b,i} f(X)\right|^2, \]
where $X$ is uniform in $\pmone^{n \times d}$  and $\partial_{b,i}f(x)$ is the discrete derivative (see \secref{sec:prelims}). Denoting by $\maxinf(f) = \max_{b \in [d],i \in [n]} \inf_{b,i}(f)$ the maximum influence of any variable in $f$ and by $\Var[f]$ the variance of $f$ on the hypercube, we show: \\

\begin{theorem}\label{thm:aa}
     Let $f$ be a degree-$d$ block-multilinear form with $\|f\|_{\cb} \le 1$. Then, we have \[\maxinf(f)  \ge \dfrac{(\Var[f])^2}{e(d+1)^4}.\]
\end{theorem}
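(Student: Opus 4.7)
The plan is to reduce the theorem, via Cauchy--Schwarz, to an $\ell^1$-style upper bound on the square roots of the influences, and then prove that $\ell^1$-bound using free probability. By the hypercube Poincar\'e inequality $\Var[f]\le \sum_{b,i}\Inf_{b,i}(f)$, and by Cauchy--Schwarz $\sum_{b,i}\Inf_{b,i}(f) \le \sqrt{\maxinf(f)}\cdot \sum_{b,i}\sqrt{\Inf_{b,i}(f)}$. Hence it suffices to prove
\[
\sum_{b,i}\sqrt{\Inf_{b,i}(f)} \;\le\; \sqrt{e}\,(d+1)^2\,\|f\|_{\cb},
\]
from which squaring yields exactly $\maxinf(f)\ge \Var[f]^2/(e(d+1)^4)$.

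To establish the square-root-influence bound, I would lift each Boolean variable $x_b(i)$ to a freely independent self-adjoint unitary $u_{b,i}$ in the reduced $C^*$-algebra $C_r^*(\mathbb{Z}_2^{*nd})$, with canonical tracial state $\tau$. Because $f$ is block-multilinear with a strict block ordering, every monomial of $f$ becomes a reduced word in the $u_{b,i}$'s, so distinct monomials are pairwise orthogonal in $L^2(\tau)$. This yields the moment-preserving identities
\[
\|f(u)\|_{L^2(\tau)}^2 \;=\; \Var[f] + (\E f)^2, \qquad \|\partial_{b,i}f(u)\|_{L^2(\tau)}^2 \;=\; \Inf_{b,i}(f).
\]
The CB assumption passes to $\|f(u)\|_{\op}\le \|f\|_{\cb}$, and each $\partial_{b,i}f$ is itself a degree-$(d-1)$ block-multilinear form on the remaining $d-1$ blocks with $\|\partial_{b,i}f\|_{\cb}\le \|f\|_{\cb}$; the latter follows from the operator identity $\partial_{b,i}f = \tfrac{1}{2}(f|_{U_b(i)=I,\,U_b(j)=0}-f|_{U_b(i)=-I,\,U_b(j)=0})$, together with the fact that $\pm I$ and $0$ are contractions. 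The key analytic tool is a Haagerup-type inequality in $C_r^*(\mathbb{Z}_2^{*nd})$: any polynomial $g$ of degree $\le k$ in these generators satisfies $\|g\|_{\op} \le (k+1)\|g\|_{L^2(\tau)}$.

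The main obstacle is converting the collection of individual derivative bounds into a single operator quantity controlled by $\|f\|_{\cb}$. The plan is to bundle the derivatives into an operator-valued gadget of the form
\[
F \;=\; \sum_{b,i} E_{b,i}\otimes \partial_{b,i}f(u) \;\in\; M_{nd}(\mathbb{C})\otimes \mathcal{A},
\]
possibly with additional free-unitary weights inserted to convert the sum-of-squares structure into a sum of square roots, and then bound $\|F\|_{\op}$ in terms of the CB norm via matrix amplification together with Haagerup. The $(d+1)^2$ in the target bound suggests two applications of Haagerup---one at the level of $f$ itself and one for the bundling gadget---while the constant $e$ is the kind that typically arises from summing Haagerup factors $(k+1)/k!$ across degrees for non-homogeneous forms. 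The delicate step is the correct choice of gadget: naive bundling yields only an $\ell^2$ bound $\sqrt{\sum\Inf_{b,i}}\lesssim (d+1)\|f\|_{\cb}$, whereas the desired $\ell^1$ bound requires a duality argument or a more refined construction, likely exploiting the factorization structure of CB block-multilinear forms established by Arunachalam--Bri\"et--Palazuelos.
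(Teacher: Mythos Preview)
Your reduction via Poincar\'e and Cauchy--Schwarz to the target inequality
\[
\sum_{b,i}\sqrt{\Inf_{b,i}(f)} \;\le\; \sqrt{e}\,(d+1)^2\,\cbnorm{f}
\]
is a legitimate first move, and the free-probability setup you describe (substituting free generators, invoking a Haagerup-type bound on each $\partial_{b,i}f$) is in the same spirit as what the paper uses on individual pieces. But the proposal stops exactly at the hard step: you explicitly acknowledge that the bundling gadget $F=\sum_{b,i}E_{b,i}\otimes\partial_{b,i}f(u)$ only yields an $\ell^2$ bound, and you defer the $\ell^1$ bound to an unspecified ``duality argument or more refined construction, likely exploiting the factorization structure of \cite{ABP18}.'' That missing conversion from $\ell^2$ to $\ell^1$ is the entire content of the theorem; everything else you wrote is standard. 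As stated, the proposal is not a proof.

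The paper supplies precisely the mechanism you are missing, and it is \emph{not} a matrix-amplification gadget. The key is a polar-decomposition trick that works only for the \emph{leftmost} (or rightmost) block. Writing $f=\sum_i U_1(i)\,f_i(\BU_2,\ldots,\BU_d)$, one substitutes $N\times N$ Haar-random unitaries in blocks $2,\ldots,d$, takes the left polar decomposition $f_i(\BU_2,\ldots,\BU_d)=V_iP_i$, and then \emph{chooses} $U_1(i)=V_0V_i^*$ for a common unitary $V_0$. This aligns all the positive parts, so that
\[
\|f(\BU_1,\ldots,\BU_d)\|_{\op}\;\ge\;\Big\|\sum_i P_i\Big\|_{\op}\;\ge\;\sum_i \tr_N(P_i)\;\ge\;\sum_i \frac{\tr_N(P_i^2)}{\|P_i\|_{\op}}.
\]
Free probability then enters only to control the right-hand side: almost surely as $N\to\infty$ one has $\tr_N(P_i^2)=\|f_i\|_2^2$ and, by Kemp--Speicher/Haagerup, $\|P_i\|_{\op}\le \sqrt{e}(d+1)\|f_i\|_2$, whence $\tr_N(P_i)\ge \|f_i\|_2/(\sqrt{e}(d+1))$. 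This is exactly the $\ell^1$-of-square-roots lower bound, but only for block $b=1$. The trick does not extend to a middle block $b$, because there $U_b(i)$ is sandwiched between non-commuting operators on both sides, and no single unitary choice of $U_b(i)$ can simultaneously rotate all the pieces to be positive.

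Consequently the paper never proves your all-blocks inequality. It sidesteps it: decompose $f=\E f+\sum_b f_b$ by leftmost block, pigeonhole to find $\beta$ with $\Var[f_\beta]\ge \Var[f]/d$, substitute the \emph{zero matrix} in blocks $1,\ldots,\beta-1$ (zero is a contraction, so this is admissible for $\cbnorm{\cdot}$), and now block $\beta$ is the leftmost block of what remains, so the polar trick applies and yields $\cbnorm{f}\ge (\sqrt{e}(d+1))^{-1}\sum_i\sqrt{\Inf_{\beta,i}(f_\beta)}$. Your Cauchy--Schwarz step is then applied to this single block; the second factor of $(d+1)$ comes from the pigeonhole on $\beta$, not from summing over all blocks. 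If you want to repair your outline, the two ideas to import are (i) zeroing out earlier blocks to force the block of interest into leftmost position, and (ii) the polar-decomposition alignment, which is what actually converts a sum of operator norms into the operator norm of a sum.
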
 
    
The main technical ingredient in the proof of \thmref{thm:aa} is a new influence inequality for \emph{homogeneous} block-multilinear forms that relates the completely bounded norm to the influences.

\begin{restatable}{theorem}{bh}  {\rm (Non-commutative root-influence inequality).}
\label{thm:bh-intro}
Let $f$ be a homogeneous degree-$d$ block-multilinear form. Then, for blocks $b \in \{1,d\}$,

    \[ \cbnorm{f}  \ge \frac{1}{\sqrt{e(d+1)}} \sum_{i=1}^n \sqrt{\Inf_{b,i}(f)}. \]
\end{restatable}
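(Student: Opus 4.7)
The plan is to evaluate $f$ on carefully chosen unitary operators and use free-probabilistic tools to extract the sum of square-roots of influences. By the symmetry between the first and last blocks (reverse the order of blocks), it suffices to treat $b=1$. Since $f$ is homogeneous of degree $d$, I would first write
\[
f(\x_1,\ldots,\x_d) \;=\; \sum_{i=1}^n x_1(i)\, g_i(\x_2,\ldots,\x_d),
\]
where each $g_i$ is a homogeneous degree-$(d-1)$ block-multilinear form in the remaining $d-1$ blocks. By Parseval, $\Inf_{1,i}(f) = \|g_i\|_2^2$, so the target reduces to
\[
\|f\|_{\cb} \;\ge\; \frac{1}{\sqrt{e(d+1)}} \sum_{i=1}^n \|g_i\|_2.
\]

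Next, I would substitute free Haar unitaries $W_b(j)$ (for $b \in \{2,\ldots,d\}$, $j \in [n]$, all freely independent) from a tracial von Neumann algebra $(M,\tau)$ for the variables of blocks $2,\ldots,d$. By orthogonality of reduced free-group words under $\tau$, the operators $G_i := g_i(W_2,\ldots,W_d) \in M$ satisfy $\tau(G_i^\ast G_j) = \delta_{ij}\,\|g_i\|_2^2$, so $\|G_i\|_{L^2(\tau)} = \|g_i\|_2$. The sum $\sum_i \|G_i\|_{L^2(\tau)}$ is thus exactly $\sum_i \sqrt{\Inf_{1,i}(f)}$, and it remains to lower-bound $\|f\|_{\cb}$ by this quantity.

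Since by definition $\|f\|_{\cb} \ge \|\sum_i U_1(i) \otimes G_i\|_{\op}$ for any tuple of unitaries $(U_1(i))$ in $M_N(M')$, the main technical step is to construct $(U_1(i))$ satisfying
\[
\Bigl\|\sum_i U_1(i) \otimes G_i \Bigr\|_{\op} \;\ge\; \frac{1}{\sqrt{e(d+1)}} \sum_i \|G_i\|_{L^2(\tau)}.
\]
Such an inequality cannot hold for arbitrary operators $G_i$, so the construction must exploit that each $G_i$ lies in the degree-$(d-1)$ subspace of polynomials in free Haar unitaries, where Haagerup's inequality controls $\|G_i\|_{\op}$ by $d$ times $\|G_i\|_{L^2(\tau)}$. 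My plan is to use a free-probabilistic construction --- taking the $U_1(i)$ to be further free Haar unitaries independent of the $W$'s, or a unitary dilation of creation operators on the full Fock space --- and pair the resulting operator against a carefully chosen vector or functional. The factor $\sqrt{d+1}$ should be the Haagerup constant for degree-$d$ monomials, and the $\sqrt{e}$ ought to come from a standard moment/hypercontractive estimate in the free setting.

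The hardest step will be this last one: producing an $\ell^1$-type lower bound $\sum_i \|G_i\|_{L^2(\tau)}$ from an operator-norm expression. Generic free-probability estimates (Haagerup's inequality, non-commutative Khintchine) give only an $\ell^2$-type lower bound $\sqrt{\sum_i \|G_i\|_{L^2(\tau)}^2} = \sqrt{\Var[f]}$, which is substantially weaker than what is claimed. Bridging this gap seems to require either a new operator-space factorisation that genuinely uses the low-degree structure of $G_i$, or a dual formulation of the cb-norm (e.g.\ via completely bounded Haagerup tensor norms) tailored to the row-$\ell^1$/column-$\ell^2$ structure of the coefficient tensor of $f$ with respect to the first block. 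This is where I expect the free-probabilistic machinery advertised in the abstract to do the essential work.
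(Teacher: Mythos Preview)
Your setup is exactly right up through the point you flag as hardest: you decompose $f=\sum_i x_1(i)\,g_i$, substitute free (or Haar random) unitaries for blocks $2,\ldots,d$ to obtain operators $G_i$ with $\|G_i\|_{L^2(\tau)}=\|g_i\|_2$, and you correctly identify the Haagerup/Kemp--Speicher inequality $\|G_i\|_{\op}\le\sqrt{e(d+1)}\,\|g_i\|_2$ as the relevant input. You are also right that taking the first-block operators to be \emph{further} free Haar unitaries, or Fock-space creation operators, can only recover an $\ell^2$ bound --- any choice oblivious to the $G_i$'s cannot do better.

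The missing idea is that the first-block unitaries must be chosen \emph{adaptively}, and the right adaptation is simply polar decomposition. Write each $G_i=U_iP_i$ (left polar decomposition), fix a reference unitary $U_0$, and set $V_i=U_0U_i^*$. Then $\sum_i V_iG_i=U_0\sum_i P_i$, whose operator norm equals $\bigl\|\sum_i P_i\bigr\|_{\op}$. Since the $P_i$ are positive, this is at least the normalized trace $\tau\bigl(\sum_i P_i\bigr)=\sum_i\tau(P_i)$. Now the Kemp--Speicher bound enters only through the elementary inequality $\tau(P)\ge\tau(P^2)/\|P\|_{\op}$ for positive $P$:
\[
\tau(P_i)\;\ge\;\frac{\tau(P_i^2)}{\|P_i\|_{\op}}\;=\;\frac{\|g_i\|_2^2}{\|G_i\|_{\op}}\;\ge\;\frac{\|g_i\|_2}{\sqrt{e(d+1)}},
\]
and summing over $i$ gives the $\ell^1$ bound directly. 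No operator-space factorisation or dual cb-norm machinery is needed. The paper implements this with $N\times N$ Haar random unitaries and lets $N\to\infty$ via the Voiculescu/Collins--Male convergence theorems, which has the convenient side effect that polar decomposition in $M_N(\C)$ always yields a genuine unitary; the same argument works directly in your tracial von Neumann algebra once you upgrade the partial isometry in each polar decomposition to a unitary.
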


\noindent{\bf Remark.} In general, the completely bounded norm can change if we permute the blocks, and the theorem above only gives a bound in terms of the influences of the variables in the leftmost and rightmost blocks.\\

The  inequality also easily implies the special case of \thmref{thm:aa} for homogeneous forms, with a better dependence on~$d$,  as follows:
    \begin{align*}
         \ \cbnorm{f}  &\ge \frac{1}{\sqrt{e(d+1)}} \sum_{i=1}^n \sqrt{\Inf_{b,i}(f)} \\
         \ &\ge \frac{1}{\sqrt{e(d+1)}}  \sum_{i=1}^n \frac{\Inf_{b,i}(f)}{\sqrt{\maxinf(f)}} \ge  \frac{\Var[f]}{\sqrt{e(d+1)\cdot  \maxinf(f)}},
    \end{align*}
    where the last inequality follows  since for any homogeneous block-multilinear form the sum of influences of variables in \emph{any} one block equals $\Var[f]$ (see \eqref{eqn:inf-tensor} in the preliminaries). Then, if $\cbnorm{f} \le 1$, it follows that \[
    \maxinf(f) \ge \frac{(\Var[f])^2}{e(d+1)}.
    \]
    The non-homogeneous case (\thmref{thm:aa}) requires a bit more care and we use the inequality as an intermediate step to prove \thmref{thm:aa} with a worse polynomial dependence on~$d$.

Combined with the results of \cite{AA14}, we obtain that completely bounded forms can be well-approximated by classical query algorithms (decision trees) on most inputs.

\begin{restatable}{corollary}{simulation}
\label{cor:sim}
 Let $\epsilon, \delta > 0$ and let $f: \pmone^{n \times d} \to \BR$ be a degree-$d$ block-multilinear form with $\|f\|_{\cb} \le 1$. Then, there is a deterministic classical algorithm that makes $O(d^5\epsilon^{-8}\delta^{-5})$ queries and approximates $f(x)$ up to an additive error $\epsilon$ on $1-\delta$ fraction of the inputs $x \in \pmone^{n \times d}$.
\end{restatable}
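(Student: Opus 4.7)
The plan is to apply the generic simulation framework of Aaronson and Ambainis~\cite{AA14} using \thmref{thm:aa} as the per-polynomial ``influential variable'' input. That framework constructs a decision tree which at every internal node queries one variable of the current residual polynomial, and at every leaf outputs the mean of the restricted polynomial. Because $f$ is fixed and known in advance, the tree can be precomputed, so the resulting classical algorithm is deterministic; the cost measured by \corref{cor:sim} is only the depth of the tree, i.e., the number of input bits it queries. Note also that $\|f\|_\infty \le \|f\|_{\cb} \le 1$, so $f$ maps $\pmone^{n \times d}$ into $[-1,1]$, placing us within the bounded-polynomial regime where \cite{AA14} operates.

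The first step is to verify that the class of degree-$d$ block-multilinear forms with $\|f\|_{\cb} \le 1$ is closed under restrictions, which is needed so that \thmref{thm:aa} can be reapplied at every node of the decision tree. If $\rho$ fixes some variables $x_b(i) = \sigma_{b,i} \in \pmone$, then $f|_\rho$ is again a block-multilinear form in the remaining variables (each monomial still has at most one variable per block). Moreover, substituting $U_b(i) = \sigma_{b,i}\cdot I$ into the non-commutative extension of $f$ for each fixed coordinate produces exactly the non-commutative extension of $f|_\rho$; since $\|\sigma_{b,i} I\|_{\op} = 1$, the resulting cb-supremum over the surviving matrix variables satisfies $\|f|_\rho\|_{\cb} \le \|f\|_{\cb} \le 1$.

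Given this closure property, \thmref{thm:aa} gives the uniform lower bound $\maxinf(f|_\rho) \ge \Var[f|_\rho]^2 / (e(d+1)^4)$ at every node of the simulation. I then invoke the AA14 simulation procedure: at each node, query the variable of maximum influence in the current restricted form (computable from the Fourier expansion of $f|_\rho$, without using any queries to $x$), and at each leaf output the mean of the restricted polynomial. The AA14 potential argument, combined with our variance-squared influence rate, shows that the expected value of $\Var[f|_\rho]$ over a uniformly random root-to-leaf path decays roughly like $d^4/T$ after $T$ queries. Choosing $T = O(d^5 \eps^{-8} \delta^{-5})$ and pruning leaves along which the variance still exceeds $\eps^2 \delta$, Markov's inequality ensures that on at least $1-\delta$ fraction of inputs the leaf we reach has variance $\le \eps^2 \delta$, so the leaf mean is within $\eps$ of $f(x)$.

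The main obstacle is bookkeeping the exponents through the AA14 potential/martingale argument, specialized to the $\maxinf \gtrsim \Var^2/d^4$ rate coming from \thmref{thm:aa} rather than the hypothetical $\Var/\poly(d)$ rate targeted by \conjref{conj:aa-inf}. This is a black-box calculation that follows the AA14 template with their influence hypothesis replaced by ours; the quadratic dependence on $\Var$ is what inflates the dependence on $\eps$ and $\delta$ compared to what one would get under the full Aaronson--Ambainis conjecture, and tracking this carefully produces the claimed $O(d^5 \eps^{-8} \delta^{-5})$ query bound.
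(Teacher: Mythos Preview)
Your proposal is correct and follows essentially the same approach as the paper: invoke the Aaronson--Ambainis simulation framework from \cite{AA14}, verify that degree-$d$ block-multilinear forms with $\|\cdot\|_{\cb}\le 1$ are closed under restrictions (so \thmref{thm:aa} applies at every node of the decision tree), and read off the query bound from the AA14 argument with the $\maxinf \ge \Var^2/(e(d+1)^4)$ rate plugged in. If anything, you give more detail than the paper, which dispatches the corollary in a single short paragraph referring to \cite[Theorem~3.3]{AA14}.
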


\subsubsection{Application to quantum algorithms}
\label{sec:quantum}

We consider quantum query algorithms of the type shown in \figref{fig:quantum}.  Any such algorithm has black-box access to the inputs $\x_1, \ldots, \x_d$ where $\x_b \in \pmone^{n}$ for each $b \in [d]$, via a phase oracle.  In other words, the algorithm can apply the unitary $O_{\x_b} = \Diag(\x_b)$ for each $b \in [d]$.  Note that $n$ here is the dimension of the underlying Hilbert space, and the inputs can be represented with $\log n$ qubits.

\begin{figure}[h]
    \centering
    \includegraphics[width=0.9\textwidth]{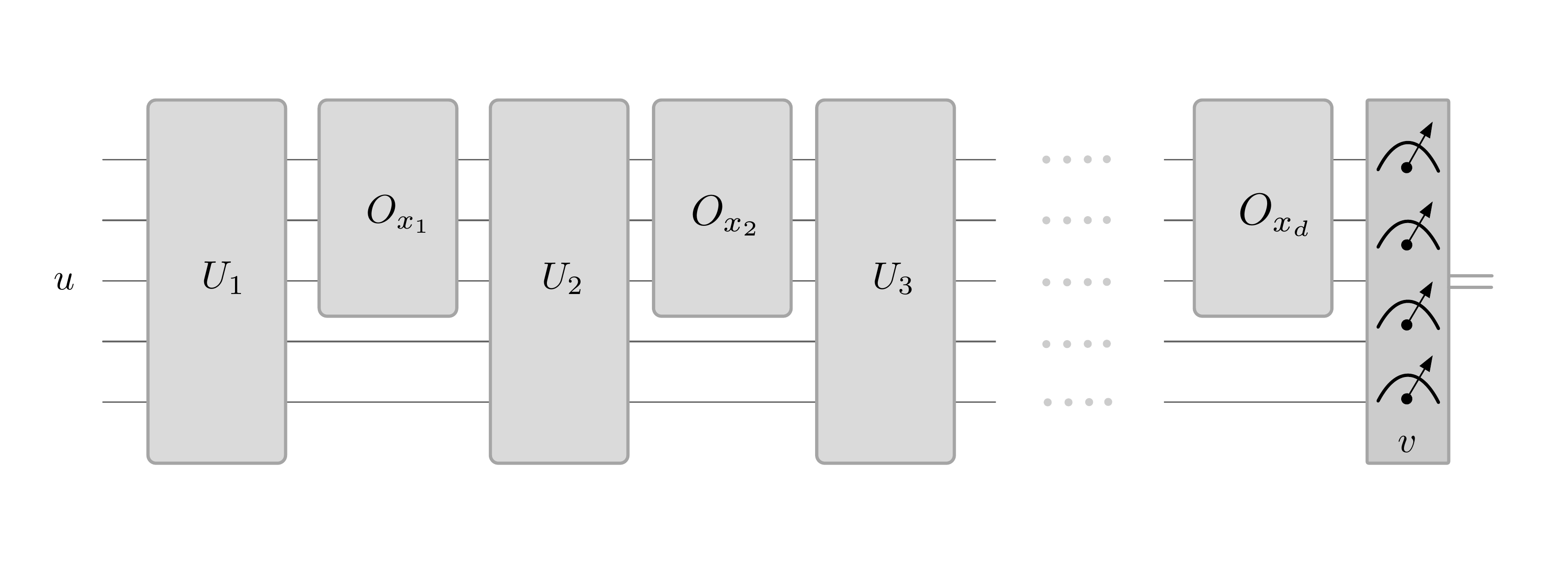}
    \caption{\footnotesize Quantum algorithms considered in \secref{sec:quantum}}
    \label{fig:quantum}
\end{figure}

The algorithm starts in some arbitrary quantum state\footnote{Throughout this paper, we will assume that all unitaries and states used in the quantum algorithm are real, which one may assume without loss of generality (see e.g.\ \cite{adh:qcomputability}).} $u \in \R^{n}$, makes $d$ quantum (phase) queries to oracles $O_{\x_b}$ for each $b \in [d]$, and succeeds according to a projective measurement that measures the projection of the final state onto some fixed state $v \in \R^n$. The algorithm is restricted to use each oracle $O_{\x_b}$ at most once. The inner product of the state $v$ with the final state at the end of the algorithm is given by the following degree-$d$ block-multilinear form $T : \pmone^{n \times d} \to \R$, 
\begin{equation}\label{eqn:q-tensor}
    T(\x_1,\ldots, \x_d) = u U_1(O_{\x_1} \otimes I_s)U_2 (O_{\x_2} \otimes I_s)U_3 \cdots (O_{\x_d} \otimes I_s) v,
\end{equation}
and the acceptance probability of the algorithm on input $x = (\x_1, \ldots, \x_d)$ is $T(x)^2$. 

The connection between such algorithms and completely bounded norm comes from the following  proposition in \cite{ABP18}.

\begin{proposition}[\cite{ABP18}, Theorem 3.2]
    Let $T : \pmone^{n \times d} \to \R$ be a degree-$d$ block-multilinear form given by \eqref{eqn:q-tensor}. Then, $\|T\|_{\cb} \le 1$. 
\end{proposition}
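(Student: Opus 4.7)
The plan is to show $\|T(\U_1,\ldots,\U_d)\|_\op \le 1$ for an arbitrary matrix substitution $U_b(i) \in \C^{N\times N}$ with $\|U_b(i)\|_\op \le 1$ by factoring $T(\U_1,\ldots,\U_d)$ into norm-$1$ pieces. The main trick is to promote the diagonal phase oracle to a block-diagonal contraction on the algorithm's Hilbert space tensored with an auxiliary $\C^N$, so that submultiplicativity of $\|\cdot\|_\op$ finishes the job.

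First I would expand $T$ explicitly as a block-multilinear form. Writing the phase oracle as $O_{\x_b} = \sum_{i=1}^n x_b(i)\, E_{ii}$ in terms of the rank-one diagonal projections $E_{ii}$, the coefficient of $x_1(i_1)\cdots x_d(i_d)$ in \eqref{eqn:q-tensor} is the scalar
\[
c_{\bi} \;=\; u^*\, U_1 (E_{i_1 i_1}\otimes I_s)\, U_2 (E_{i_2 i_2}\otimes I_s) \cdots U_d (E_{i_d i_d}\otimes I_s)\, v,
\]
so $T(\U_1,\ldots,\U_d) = \sum_{\bi \in [n]^d} c_\bi\, U_1(i_1) U_2(i_2) \cdots U_d(i_d)$ by the definition of the non-commutative evaluation. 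Next I introduce the ``dilated oracles''
\[
\Phi_b \;:=\; \sum_{i=1}^n (E_{ii}\otimes I_s) \otimes U_b(i), \qquad b \in [d],
\]
acting on $(\C^n \otimes \C^s) \otimes \C^N$. Since the projections $E_{ii}\otimes I_s$ are pairwise orthogonal and sum to the identity, $\Phi_b$ is block-diagonal in the first tensor factor with diagonal blocks $U_b(i)$, so $\|\Phi_b\|_\op = \max_i \|U_b(i)\|_\op \le 1$.

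The key algebraic identity is then
\[
T(\U_1,\ldots,\U_d) \;=\; (u\otimes I_N)^* \, (U_1\otimes I_N)\,\Phi_1\,(U_2\otimes I_N)\,\Phi_2 \cdots (U_d\otimes I_N)\,\Phi_d \, (v\otimes I_N),
\]
which follows by expanding each $\Phi_b$: the first tensor factor collapses to $c_\bi$ while the second produces the non-commutative monomial $U_1(i_1) \cdots U_d(i_d)$. Submultiplicativity of the operator norm, combined with $\|(u\otimes I_N)^*\| = \|u\| \le 1$, $\|v\otimes I_N\| = \|v\| \le 1$, $\|U_b\otimes I_N\|_\op = 1$ (each $U_b$ being unitary), and $\|\Phi_b\|_\op \le 1$, then gives $\|T(\U_1,\ldots,\U_d)\|_\op \le 1$; taking the supremum over $N$ and contractive $U_b(i)$ yields $\|T\|_\cb \le 1$. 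No step is genuinely hard; the conceptual point worth emphasizing is that the mutual orthogonality of the diagonal query projections is precisely what makes the oracle-to-dilation substitution contractive, which is what allows quantum query algorithms with phase oracles to sit naturally inside the completely bounded framework.
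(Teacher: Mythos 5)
The paper does not give a proof of this proposition --- it cites it directly as Theorem~3.2 of \cite{ABP18} --- so there is nothing in the source to compare against. Your argument is correct and is essentially the standard dilation proof used there: expanding the phase oracle as $O_{\x_b}=\sum_i x_b(i)E_{ii}$ identifies the coefficients $c_{\bi}$; the dilated oracles $\Phi_b=\sum_i (E_{ii}\otimes I_s)\otimes U_b(i)$ are block-diagonal contractions precisely because the diagonal query projections are mutually orthogonal and sum to the identity; the factorization
\[
T(\U_1,\ldots,\U_d) = (u\otimes I_N)^* (U_1\otimes I_N)\,\Phi_1\,(U_2\otimes I_N)\,\Phi_2\cdots(U_d\otimes I_N)\,\Phi_d\,(v\otimes I_N)
\]
is verified by expanding each $\Phi_b$ and using $(u\otimes I_N)^*(A\otimes B)(v\otimes I_N) = (u^*Av)\,B$; and submultiplicativity of $\|\cdot\|_\op$ then gives $\|T(\U_1,\ldots,\U_d)\|_\op\le 1$ for every dimension $N$ and every substitution by contractions, hence $\cbnorm{T}\le 1$. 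This is exactly the Christensen--Sinclair-type explicit factorization through contractions that \cite{ABP18} exhibits for the algorithm's final-state overlap, so the proposal reproduces the intended argument.
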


Using this connection, applying \corref{cor:sim} to $T$ implies the following almost-everywhere simulation result for quantum algorithms of the type mentioned above.

\begin{corollary}
     The acceptance probability of any $d$-query quantum algorithm of the type shown in \figref{fig:quantum} can be estimated up to an additive error $\eps$ on $1-\delta$ fraction of the inputs in $\pmone^{n \times d}$ by a classical query algorithm making $O(d^5\epsilon^{-8}\delta^{-5})$ queries. 
\end{corollary}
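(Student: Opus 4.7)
The plan is to apply \corref{cor:sim} directly to the amplitude form $T$ defined in~\eqref{eqn:q-tensor}, and then account for the squaring step that converts an amplitude estimate into a probability estimate. By construction, the acceptance probability of the algorithm on input $x$ equals $T(x)^2$, and Proposition~1.5 of~\cite{ABP18} guarantees $\|T\|_{\cb} \le 1$. Hence $T$ is a degree-$d$ block-multilinear form on $\pmone^{n \times d}$ that satisfies the hypothesis of~\corref{cor:sim}, so for any $\epsilon' > 0$ there is a deterministic classical algorithm making $O(d^5 (\epsilon')^{-8} \delta^{-5})$ queries that, on a $(1-\delta)$-fraction of inputs $x$, outputs a value $\widetilde T(x)$ with $|\widetilde T(x) - T(x)| \le \epsilon'$.

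To produce an estimate of $T(x)^2$, I would simply have the classical algorithm output $\widetilde T(x)^2$. Since $|T(x)| \le \|T\|_\infty \le \|T\|_{\cb} \le 1$, whenever $\epsilon' \le 1$ and the amplitude estimate succeeds we have $|\widetilde T(x)| \le 2$, so
\[
\bigl|\widetilde T(x)^2 - T(x)^2\bigr| \;=\; \bigl|\widetilde T(x) + T(x)\bigr|\cdot\bigl|\widetilde T(x) - T(x)\bigr| \;\le\; 3\epsilon'.
\]
Choosing $\epsilon' = \epsilon/3$ yields the desired $\epsilon$-approximation of the acceptance probability on a $(1-\delta)$-fraction of inputs, with query complexity $O(d^5 \epsilon^{-8} \delta^{-5})$ as claimed. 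There is no real obstacle here: Proposition~1.5 absorbs all the quantum-specific content by turning amplitudes into cb-bounded forms, and \corref{cor:sim} provides the almost-everywhere simulation; the only remaining work is the elementary reduction from approximating $T$ to approximating $T^2$.
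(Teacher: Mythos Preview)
Your proposal is correct and matches the paper's approach exactly: the paper simply says ``applying \corref{cor:sim} to $T$'' together with the \cite{ABP18} proposition yields the corollary, and you have filled in the one detail the paper leaves implicit, namely the elementary step converting an $\epsilon'$-approximation of $T$ into an $O(\epsilon')$-approximation of $T^2$ via $|T|\le 1$.
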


Note that quantum algorithms of the type considered in the above theorem can already exhibit super-exponential separation, in terms of the input size (which is $\log n$ qubits),  over classical algorithms in the query complexity model. For instance, problems like $k$-fold Forrelation (for $k= O(1)$) or its variants exhibit a $O(1)$ vs $n^{1-1/k}$ separation \cite{BS21, SSW21} between the quantum and classical query complexities.

\subsection{Proof overview}
\label{sec:bh}
\newcommand{\bp}{\mathbf{p}}
\newcommand{\bits}{\{0,1\}}
\newcommand{\addr}[1]{\mathrm{addr}(#1)}

We first consider the case of homogeneous forms and explain the key ideas that go towards proving \thmref{thm:bh-intro}. We can write a homogeneous block-multilinear form in the following way,
 \begin{align*}
    f(\x_1,\ldots, \x_d) &= \sum_{i_1, \ldots, i_d \in [n]} \hf_{i_1, \ldots, i_d} ~x_1(i_1)x_2(i_2)\cdots x_d({i_d}) \\
    \ & = \sum_{i=1}^n  x_1(i) \underbrace{\left(\sum_{i_2, \ldots, i_d \in [n]} \hf_{i_1, \ldots, i_d} ~x_2(i_2)\cdots x_d({i_d})\right)}_{\textstyle := f_i(\x_2,\ldots, \x_d)}.
\end{align*}
    For a first attempt, let us try to show that $\cbnorm{f}$ must be large by picking $x$ from the discrete cube $\pmone^{n \times d}$ as follows: for each block $b$ except the first block, we choose $\x_b$ uniformly and independently from $\pmone^n$, and for the first block we take $x_1(i) = \mathrm{sign}(f_i(\x_2,\ldots, \x_d))$. Taking expectation, this gives us that
 \[ 
 \cbnorm{f} \ge \sum_{i=1}^n \E |f_i| \ge {2^{-d/2}} \sum_{i=1}^n \|f_i\|_2,
 \]
 where the second inequality follows from the multilinear Khintchine inequality\footnote{The multilinear Khintchine inequality states that $\BE|f| \ge 2^{-d/2} \|f\|_2$ for a homogeneous degree-$d$ block-multilinear form $f$. A similar conclusion $\BE|f| \ge 3^{-d} \|f\|_2$ holds for any degree-$d$ polynomial $f$ on the hypercube and can be derived from the $(4,2)$-hypercontractive inequality $(\BE f^4)^{1/4} \leq  3^{d/2}\|f\|_2$ and using that $\E[|S|] \geq {\E[S^2]^{3/2}}/{\E[S^4]^{1/2}}$ for any random variable $S$.} which gives us an exponential dependence in $d$. Note that $\|f_i\|_2^2 = \Inf_{1,i}(f)$ for each $i$, thus we get that 
  \begin{equation}\label{eq:cbnormvsinf} \cbnorm{f} \ge {2^{-d/2}} \sum_{i=1}^n  \sqrt{\Inf_{1,i}(f)}. 
  \end{equation}
The above also gives a lower bound on $\|f\|_{\infty}$ which is also a lower bound on $\cbnorm{f}$. However, the  exponential dependence in $d$ is necessary for the sup-norm as the following example shows.

\begin{description}\item[Example.] Consider the following block-multilinear form closely related to the address function. Let $n=2^d$ and for $\ba = (a_1, \ldots, a_d) \in \bits^d$, let $\addr{\ba}$ denote the unique integer in $[n]$ whose binary expansion equals $\ba$. Define the degree-$(d+1)$ homogeneous block-multilinear form $f: \pmone^{n \times (d+1)} \to \{\pm1\}$ as follows,
\begin{align}\label{eqn:addr}
    \ f(\x_1, \ldots, \x_d, \x_{d+1}) &= \sum_{\ba \in \bits^d} g_{\ba}(\x_1,\ldots,\x_d) \cdot x_{d+1}(\addr{\ba}),
\end{align}
where $g_{\ba}(\x_1,\ldots, \x_d): \pmone^{n \times d} \to \{-1,0,1\}$ is defined as 
\[ g_{\ba}(\x_1,\ldots, \x_d) = \left(\frac{x_1(1) + (-1)^{a_1} x_1(2)}{2}\right)\cdot \cdots \left(\frac{x_d(1) + (-1)^{a_d} x_d(2)}{2}\right).\]

Note that $f$ only depends on the first two variables in the blocks $\x_1,\ldots, \x_d$ (which we refer to as the address blocks) and all the variables in the last block $\x_{d+1}$ (which we refer to as the data block). Moreover, $g_\ba(\x_1,\ldots,\x_d) \in \pmone$ iff the parity of bits in the address blocks matches with $\ba$, that is $x_b(1)x_b(2) = (-1)^{a_b}$ for every $b \in [d]$, and $g_\ba(\x_1,\ldots,\x_d) = 0$ otherwise. 

 It follows that $\|f\|_{\infty} = 1$, as for any setting of $\x_1, \ldots, \x_d$ exactly one term in the summation in \eqref{eqn:addr} survives. However, for $i = \addr{\ba} \in [n]$, 
\[ \inf_{d+1,i}(f) = \BE[|g_a|^2] = 2^{-d} = \frac1n,\] 
thus  $\sum_{i=1}^n \sqrt{\inf_{d+1,i}(f)} = \sqrt{n} = 2^{d/2}$. 
\end{description}

On the other hand, $\|f\|_{\cb} \ge 2^{d/2}$ in the example above --- this can be checked by plugging in the following values on the complex unit circle (one-dimensional unitaries): $x_b(1)=1,x_b(2)=i$ for each $b \in [d]$ and choosing the data block $\x_{d+1}$ so that all the magnitudes add up. Thus, one can hope that the freedom to choose large matrices can still allow us to show something like inequality~\eqref{eq:cbnormvsinf} for the completely bounded norm with a polynomial dependence on~$d$, instead of exponential.

\paragraph{Lower bounding $\cbnorm{f}$ using Haar random unitaries.} 
\begin{sloppypar}
Our key observation is that a non-commutative analog of the above strategy works very well. In particular, substituting $N \times N$ Haar random unitaries $U_2(1), \ldots, U_2(n), \ldots, U_d(1),\ldots, U_d(n)$ for the blocks $x_2, \ldots, x_d$ and choosing the block $x_1$ depending on the polar decomposition of $f_i(\U_2, \ldots, \U_d)$ allows one to obtain a much larger lower bound on the completely bounded norm $\cbnorm{f}$, losing only a polynomial rather than an exponential factor in~$d$. 
\end{sloppypar}

To obtain quantitative bounds, we need to understand the operator norm of low-degree polynomials of Haar random unitaries. A standard way to upper bound the expected operator norm of random matrices is via the trace method: computing the expected (normalized) trace of the matrix $(AA^*)^m$ for large enough $m$, and then taking $m$th root, gives a good control of the operator norm $\|A\|_{\op}$. Since the entries of a Haar random unitary are not independent of one another, it is hard to get a handle on the expected trace directly. A powerful method to understand such quantities is via free probability theory, which considers what happens when the dimension of the matrices $N \to \infty$. In this case, large random matrices behave like free operators, which live on an infinite-dimensional space with a corresponding ``trace''. We rely on a limiting theorem of Collins and Male \cite{CM11} who, by strengthening a result of Voiculescu \cite{V98}, show that the operator norm of a polynomial of Haar random unitaries converges to the operator norm of the polynomial of certain infinite-dimensional operators, called \emph{free Haar unitaries}; thus it suffices to study such free operators.

In free probability theory, such quantities have been studied for a long time (since the work of Haagerup \cite{H78}), and we rely on a result of Kemp and Speicher \cite{KS05} who generalized Haagerup's inequality  and showed that for free Haar unitaries one can obtain much better bounds for the operator norm using the usual trace method. In particular, one gets that almost surely as $N \to \infty$, we have
\[
\|f_i(\U_2,\ldots, \U_d)\|_{\op} \le \poly(d) \|f_i\|_2,
\]
in the non-commutative setting. Crucially, the improvement comes because free operators are much more constrained, and many terms that arise while looking at higher moments using the trace method in free probability are zero. One can keep close track of the non-zero terms by using careful combinatorial counting involving what are called \emph{non-crossing partitions}.  

Using the above, one can obtain \thmref{thm:bh-intro} with the strategy described above using the polar decomposition. The non-homogeneous case requires a bit more technical care, but the key underlying idea is the same.

\section{Preliminaries}\label{sec:prelims}

\noindent{\bf Notation.}

 Throughout this paper, $[d]$ denotes the set $\{1,2,\dotsc, d\}$. For a random vector (or bit-string) $z$ in $\BR^n$, we will use $z_i$ or $z(i)$ to denote the $i$-th coordinate of $z$, depending on whether we need to use the subscript for another index. We shall use $\ui = (i_1,\ldots, i_d)$ for a $d$-tuple of indices.  For a $d$-tuple $\ui$, we write $|\ui|=d$ to denote the size of the tuple.

 For a matrix $M \in \C^{N \times N}$, we denote by $M^*$ its conjugate transpose.
 Given a string $x \in \R^N$, the $N \times N$ diagonal matrix with $x$ on the diagonal is denoted by $\Diag(x)$. The \emph{normalized trace} of an $N \times N$ matrix $M$ is defined as $\tr_N(M) = \frac{1}{N} \left(\sum_{i=1}^N M_{ii}\right)$. The operator norm of a matrix $M$ is denoted by $\|M\|_{\op}$. The left (resp.\ right) polar decomposition $(V, P)$ of a square matrix $M$ is a factorization of the form $M=VP$ (resp.\ $M=PV$) where $V$ is a unitary matrix and $P$ is a positive semidefinite matrix --- such a factorization always exists for any square matrix (it can be obtained easily from the singular-value decomposition of $M$). An $N \times N$ matrix $U$ is called a Haar random unitary if it is distributed according to the Haar measure on the Unitary group $\mathbb{U}(N)$.

Random variables are typically denoted by capital letters (e.g., $X$). We write $\BE[f(X)]$ and $\Var[f(X)]$ to denote the expectation and variance of the random variable $f(X)$ and if $f: \pmone^m \to \R$, we abbreviate it to $\E f$ and $\Var[f]$, where the expectation and variance are taken with respect to the uniform measure on the discrete cube $\pmone^m$.

\paragraph{Fourier Analysis on the Discrete Cube.}
\label{sec:fourier}
We give some basic facts about Fourier analysis on the discrete cube and refer to the book \cite{OD14} for more details. Every function $f: \{\pm1\}^m \to \BR$ can be written uniquely as a sum of monomials $\chi_S(x) = \prod_{i \in S} x_i$,
\begin{align}\label{eqn:expansion}
    f(x) = \sum_{S \subseteq [m]} \fhat(S) \chi_S(x),
\end{align}
where $\fhat(S) = \BE[f(X)\chi_S(X)]$ is the Fourier coefficient with respect to the uniform $X \in \{\pm 1\}^m$. The monomials $\chi_S(x) = \prod_{i \in S} x_i$ form an orthonormal basis for real-valued functions on $\{\pm1\}^m$, called the \emph{Fourier basis}. Parseval's identity implies that for uniform $X \in \{\pm 1\}^m$,
\begin{align*}
    \ \BE f^2 = \sum_{S \subseteq [m]} \fhat(S)^2, \text{ and } \Var[f] = \sum_{S \subseteq [m]: S \neq \emptyset} \fhat(S)^2.
\end{align*}

For a function on the hypercube, we define $\|f\|^2_2 := \BE f^2$ which can also be viewed as the sum of squared Fourier coefficients because of Parseval's identity.

The discrete derivative of a function on the hypercube $\{\pm 1\}^m$ is given by  
\[ \partial_i f(x) = \frac{1}{2} (f(x^{i\to 1}) - f(x^{i\to -1})),\]
where $x^{i\to b}$ is the same as $x$ except that the $i$-th coordinate is set to $b$. It is easily checked that $\partial_i f(x)$ coincides with the real partial derivative $\dfrac{\del{}}{\del x_i}$ of the real multilinear polynomial given by \eqref{eqn:expansion}. 

For a real-valued function $f: \pmone^m \to \BR$, the influence of a variable $x_i$ on $f$ is defined as 
\[ \inf_i(f) = \BE| \partial_if|^2 = \sum_{S \subseteq [m] : i \in S} \fhat(S)^2. \]

\paragraph{Block-multilinear Forms.}

A degree-$d$ block-multilinear form $f: \pmone^{n \times d} \to \BR$ is given by
\begin{equation}\label{eqn:tensor1}
    f(\x_1,\ldots, \x_d) =  \Ef + \sum_{m = 1}^d \sum_{\substack{|\bb|=m\\|\bi|=m}} \hf_{\bb, \bi} \cdot  ~x_{b_1}(i_1)x_{b_2}(i_2)\cdots x_{b_m}({i_m}),
\end{equation}
where the tuple $\bb  = (b_1, \ldots, b_m)$ satisfies $1 \le b_1 < \ldots < b_m \le d$ and $\bi \in [n]^m$ is an $m$-tuple. The expectation that yields the constant term is uniform over $\pmone^{n \times d}$. Note that $m$
 is determined from the size of the tuple $\bb$, so we just write $\hf_{\bb, \bi}$ above. 

From Parseval's identity, the variance of $f$ and the influence of $x_b(i)$ on $f$ (where $b \in [d]$ and $i \in [n]$) are respectively given by 
\[ \Var[f] =  \sum_{m=1}^d \sum_{\substack{|\bb|=m\\ |\bi|=m}} \hf_{\bb,\ui}^2 ~\text{ and }~ \inf_{b,i}(f) = \sum_{m=1}^d \sum_{\substack{|\bb|=m : b\in \bb \\|\bi|=m:i \in \bi}} \hf_{\bb,\ui}^2.\]

From the above, it follows that for any block $b \in [d]$, 
\begin{align}\label{eqn:inf-tensor}
  \  \sum_{i \in [n]} \inf_{b,i}(f)^2 \le \Var[f] \le \sum_{b \in [d], i \in [n]} \inf_{b,i}(f)^2 
\end{align}
where the first inequality is an equality if $f$ is a homogeneous degree-$d$ block-multilinear form. For any $b \in [d]$, we write $\maxinf_b(T) = \max \{ \inf_{b,i}(T)  \mid i \in [n]\}$ to denote the maximum influence of any variable in the block $\x_b$.

Note that if $f$ is a degree-$d$ block-multilinear form and if we fix some of the input bits to $\pm 1$, then the resulting function $g$ is also a degree-$d$ block-multilinear form with the same blocks $\x_1, \ldots, \x_d$, but it does not depend on the variables that were fixed. It is also easy to see that $\cbnorm{g} \le \cbnorm{f}$ because while computing $\|g\|_{\cb}$ we may restrict the matrix variables to $\pm I$ if that particular variable was set to $\pm 1$. In other words, completely bounded norm does not increase under restrictions.

\section{Influence in Completely Bounded Block-multilinear Forms}
\label{sec:proof}
\newcommand{\blocks}{\mathrm{blocks}}
\newcommand{\lt}{\mathrm{left}}
\newcommand{\rt}{\mathrm{right}}

In this section we prove the non-commutative root-influence inequality (\thmref{thm:bh-intro}),  the special case of the Aaronson-Ambainis conjecture given in \thmref{thm:aa}, and also briefly mention how the simulation result in \corref{cor:sim} follows from \thmref{thm:aa} and the results in \cite{AA14}. We first need some preliminaries from free probability theory.

\subsection{Low-degree Polynomials of Haar Random Unitaries}

As discussed in the proof overview, we require bounds on the operator norm (as well as normalized trace) of low-degree polynomials of random unitaries and these follow from known results in free probability theory. Here we explain these connections and also prove some auxillary lemmas needed for the proof of \thmref{thm:bh-intro} and \thmref{thm:aa}.

Let $z_{\ui}$ denote the non-commutative monomial $z_{i_1} z_{i_2} \cdots z_{i_d}$ for a $d$-tuple $\ui  = (i_1, \ldots, i_d) \in [t]^d$ and let $p(z_1, \ldots, z_t)$ be a non-commutative polynomial in the variables $z_1, \ldots, z_t$. We are interested in computing the operator norm $\|\cdot\|_{\op}$ and the normalized trace  $\tr_N$ of the polynomial $p(z_1, \ldots, z_t)$ (or its higher moments) when substituting $N \times N$ Haar random unitaries for the variables $z_i$.

As explained previously, the theory of free probability gives us tools that allow us to compute  the above in the limit $N \to \infty$. In particular, Voiculescu \cite{V98} showed that the  (normalized) trace of polynomials in Haar random unitaries and their conjugates converges to the trace of the same polynomial evaluated on certain infinite-dimensional operators called \emph{Haar unitaries} that satisfy a non-commutative notion of independence called \emph{free independence}. This was strengthened by Collins and Male \cite{CM11} who showed that such convergence also holds for the operator norm. A short primer on free probability is given in \appref{sec:free}, but for now one can think of $\CA$ as a self-adjoint algebra of bounded linear operators on a Hilbert space and $\phi$ as a trace functional for such operators in the statement given below.

\begin{theorem}[\cite{V98, CM11}] \label{thm:voiculescu}
    Let $p(z_1, \ldots, z_{2t})$ be a non-commutative polynomial in $\BR\langle z_1, \ldots, z_{2t}\rangle$. If $U_1, \ldots, U_t$ are $N \times N$ Haar random unitaries, then almost surely,
    \begin{align*}
     \ \tr_N[p(U_1, \ldots, U_t, U^*_1, \ldots, U_t^*)] &~\xrightarrow[N \to \infty]{}~ \phi[p(u_1, \ldots, u_t, u^*_1, \ldots, u^*_t)],\\
    \  \|p(U_1, \ldots, U_t, U^*_1, \ldots, U_t^*)\|_{\op} &~\xrightarrow[N \to \infty]{}~ \| p(u_1, \ldots, u_t, u^*_1, \ldots, u^*_t)\|,
    \end{align*}
    where $u_1, \ldots, u_t$ are free Haar unitaries in a $C^*$-probability space $(\CA, \phi)$ and $\|\cdot\|$ is the norm for the underlying $C^*$-algebra.
\end{theorem}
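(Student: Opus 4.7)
The plan is to handle the two conclusions separately, since the normalized-trace convergence (Voiculescu) follows from Weingarten combinatorics plus concentration, while the operator-norm strengthening (Collins--Male) requires much heavier machinery and is where the real difficulty lies.

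For the trace convergence I would first prove convergence in expectation. Expanding $\E\,\tr_N[w(U_1,\ldots,U_t,U_1^*,\ldots,U_t^*)]$ as a sum over matrix indices and grouping the copies of each $U_i$, one can apply the Weingarten calculus: the expectation of a product of entries of a single Haar random unitary equals a sum over pairs of permutations $(\sigma,\tau)$, weighted by the Weingarten function $\mathrm{Wg}_N(\sigma\tau^{-1})$, whose $1/N$-asymptotics are well understood. Tracking powers of $N$, only ``non-crossing'' configurations contribute at leading order, and these precisely reproduce the moments of free Haar unitaries --- namely $\phi(w)=0$ for every non-identity reduced word in $u_i$ and $u_i^*$. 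To pass from expected to almost sure convergence, I would bound $\mathrm{Var}(\tr_N[w]) = O(1/N^2)$ via a second Weingarten computation (or alternatively invoke a log-Sobolev / concentration inequality on $\mathbb{U}(N)$) and conclude by Borel--Cantelli.

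For the operator-norm convergence, the lower bound is short: since $\|A\|_{\op}^{2m} \ge \tr_N((AA^*)^m)$, the already-established trace convergence gives $\liminf_N \|p(U_1,\ldots,U_t^*)\|_{\op} \ge \phi((p(u)p(u)^*)^m)^{1/(2m)}$ almost surely, and sending $m\to\infty$ recovers the $C^*$-norm $\|p(u_1,\ldots,u_t^*)\|$. The matching upper bound, which is the Collins--Male contribution, is the hard part. My route would be: (i) a linearization trick to replace $p$ by a self-adjoint linear pencil with matrix coefficients, so that bounding $\|p(U)\|_{\op}$ reduces to a spectral-radius question for a matrix affine in the $U_i, U_i^*$; (ii) invoke the Haagerup--Thorbj{\o}rnsen strong-convergence theorem for independent GUE matrices; (iii) transfer strong convergence from GUE to Haar unitaries, for instance by exponentiating skew-Hermitian Brownian motion to realize unitary Brownian motion, establishing strong convergence of the heat-kernel measure on $\mathbb{U}(N)$ to the free unitary heat kernel, and then letting time tend to infinity so the heat kernel approaches the Haar measure.

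The main obstacle will clearly be step (iii) of the upper bound: moment convergence does not by itself control the top of the spectrum, so one needs norm-level (``strong'') statements throughout, and the transfer from GUE to Haar is delicate. Controlling outlier eigenvalues of polynomials in Haar unitaries requires either the original Collins--Male analytic arguments or the cleaner heat-kernel interpolation alluded to above, and in either case this is precisely what makes Collins--Male a substantial strengthening of Voiculescu's original result rather than a formal consequence of it.
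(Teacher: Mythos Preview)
The paper does not prove this theorem at all: it is stated with the citation [V98, CM11] and then immediately used as a black box (``Using the above result it suffices to consider free Haar unitaries\ldots''). There is therefore nothing in the paper to compare your proposal against.

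That said, your outline is a fair high-level summary of how the two cited results are actually established in the literature. The split into (a) Voiculescu's trace convergence via Weingarten asymptotics plus a variance/Borel--Cantelli argument and (b) the Collins--Male strong convergence for the operator norm, with the lower bound coming cheaply from moments and the upper bound requiring linearization and a transfer from the Haagerup--Thorbj{\o}rnsen GUE result, is accurate. You also correctly identify step~(iii), the passage from GUE to Haar unitaries, as the genuinely delicate point; Collins and Male carry this out via an analytic argument rather than the heat-kernel interpolation you suggest, but the latter is a legitimate alternative route that appears elsewhere in the strong-convergence literature. Just be aware that what you have written is a roadmap, not a proof: each of the three steps in the upper bound hides substantial work, and in a self-contained write-up one would simply cite [CM11] exactly as the paper does.
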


Using the above result it suffices to consider free Haar unitaries in a $C^*$-probability space to compute the operator norm and trace of polynomials of random unitaries. For a non-commutative polynomial $p(z_1, \ldots, z_t) = \sum_{|\ui|\le d} c_{\ui}z_{\ui}$, denoting by $\|p\|_2 =  \left(\sum_{|\ui| \le d} |c_{\ui}|^2\right)^{1/2}$, one can show the following easily using techniques from free probability. 

\begin{lemma} \label{thm:trace}
    Let $p(z_1, \ldots, z_t) = \sum_{|\ui|\le d} c_{\ui}z_{\ui} $ be a non-commutative degree-$d$ polynomial in $\R\langle z_1, \ldots, z_t\rangle$ and $u_1, \ldots, u_t$ be free Haar unitaries in a $C^*$-probability space $(\CA, \phi)$. Then, 
     \[ \phi[p(u_1, \ldots, u_t) (p(u_1, \ldots, u_t))^*] =  \|p\|_2^2.\]
\end{lemma}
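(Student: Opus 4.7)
The plan is to expand $p(u_1,\ldots,u_t)\,p(u_1,\ldots,u_t)^*$ bilinearly inside $\phi$ and reduce the whole statement to the ``Kronecker'' identity
\[
\phi\bigl(u_{\ui}\, u_{\bj}^*\bigr) \;=\; \delta_{\ui,\bj},
\]
where $\delta_{\ui,\bj}$ equals $1$ iff $|\ui|=|\bj|$ and the tuples agree coordinate-wise. Given this identity, real-valuedness of the coefficients immediately yields
\[
\phi\bigl(p(u)\,p(u)^*\bigr) \;=\; \sum_{|\ui|,|\bj|\le d} c_{\ui}\,c_{\bj}\,\phi(u_{\ui}u_{\bj}^*) \;=\; \sum_{|\ui|\le d} c_{\ui}^{\,2} \;=\; \|p\|_2^{\,2},
\]
which is the desired conclusion.

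For the Kronecker identity I would invoke the standard free-probability fact, reviewed in \appref{sec:free}, that free Haar unitaries realize the left regular representation of the free group $F_t$ on $t$ generators. More precisely, setting $u_i^{-1}:=u_i^*$, the subgroup of the unitary group of $\CA$ generated by $u_1,\ldots,u_t$ is isomorphic to $F_t$, and $\phi$ restricted to this subgroup is the Dirac mass at the group identity. This follows by combining two ingredients: (a)~each $u_i$ being Haar forces $\phi(u_i^k)=0$ for every nonzero integer $k$; and (b)~the free independence of the subalgebras $\mathrm{alg}(u_i,u_i^*)$ forces $\phi$ to vanish on any alternating product of centered powers $u_{\ell_1}^{k_1}\cdots u_{\ell_m}^{k_m}$ with $\ell_s\neq \ell_{s+1}$ and $k_s\neq 0$.

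It then remains to observe that the free-group element $u_{\ui}u_{\bj}^* = u_{i_1}\cdots u_{i_{|\ui|}}\,u_{j_{|\bj|}}^{-1}\cdots u_{j_1}^{-1}$ has only positive letters in its left half and only negative letters in its right half, so every free-reduction cancellation must occur at the central junction and peels off matched pairs from the middle outward. A short induction on the length then shows the word reduces to the identity iff $|\ui|=|\bj|$ and $\ui=\bj$ coordinate-wise; in every other case, after grouping consecutive equal-index letters into single blocks $u_\ell^k$, the reduced form is a nonempty alternating/centered word in $F_t$ and hence has $\phi$-trace zero by the criterion above. The one mildly delicate step is cleanly justifying this regrouping and the application of the freeness criterion to it, but this is the textbook worked example in free probability and is handled by the review in \appref{sec:free}.
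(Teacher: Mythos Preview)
Your proposal is correct and follows essentially the same route as the paper: both expand $\phi(p\,p^*)$ bilinearly and reduce to showing $\phi(u_{\ui}u_{\bj}^*)=\delta_{\ui,\bj}$ via the free-group characterization of moments of free Haar unitaries (\pref{prop:word} in the paper), observing that $u_{\ui}u_{\bj}^*$ reduces to the identity in $F_t$ iff $\ui=\bj$. Your ``peel from the middle'' justification for this last point is a bit more explicit than the paper's one-line assertion, but the argument is the same.
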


The above implies that $\tr_N[p(U_1, \ldots, U_t) (p(U_1, \ldots, U_t))^*]$ converges to $\|p\|_2^2$ almost surely as $N \to \infty$. We shall defer the proof of \lref{thm:trace} to \appref{sec:app}, but to aid our intuition we note here that since the  $U_i$'s are independent $N \times N$ Haar random unitaries, the expected value

\[ \BE\left[\tr_N[p(U_1, \ldots, U_t) (p(U_1, \ldots, U_t))^*\right] = \|p\|_2^2,\] 
{and from concentration of measure, it is natural to expect that it converges to the above value}.

Similarly, to compute the operator norm of $p(U_1, \ldots, U_t)$ for Haar random unitaries one can instead study the norm of the polynomial evaluated on free Haar unitaries. Such bounds are easier to prove using the trace method since free independence imposes strong restrictions on the non-commutative moments. For instance, if $U_1$ and $U_2$ are independent $N \times N$ Haar random matrices, then $\BE[\tr_N(U_1U_2U^*_1U_2^*)]$ is non-zero (albeit quite small), while the corresponding trace evaluated on free Haar unitaries $u_1$ and $u_2$ is zero, that is $\phi(u_1u_2u^*_1u_2^*) = 0$. Thus, computing the trace $\phi[p(u_1,\ldots, u_t, u^*_1, \ldots, u_t^*)]$ reduces to handling the combinatorics of the patterns of $u_i$'s and $u_i^*$'s. 

In particular, we will rely on the following result that follows from the work of Kemp and Speicher \cite{KS05}  who consider the operator norm of homogeneous polynomials evaluated on free $R$-diagonal operators, a class that includes free Haar unitaries as well. We also remark that a bound where the right-hand side below is worse by a multiplicative $O(d^{1/2})$ factor also follows from the work of Haagerup\footnote{We note that Haagerup considered the more general case of polynomials in both $u_i$'s and $u^*_i$'s.}\cite{H78} who proved it in another context, predating even the introduction of free probability theory.

\begin{theorem}[\cite{KS05}]
\label{thm:kemp-speicher}
    Let $p(z_1, \ldots, z_t) = \sum_{|\ui| = d} c_{\ui}z_{\ui} $ be a homogeneous non-commutative degree-$d$ polynomial in $\R\langle z_1, \ldots, z_t\rangle$ and $u_1, \ldots, u_t$ be free Haar unitaries in a $C^*$-probability space. Then, 
    \[ 
    \|p(u_1, \ldots, u_t)\| \le \sqrt{e(d+1)} \cdot \|p\|_2,
    \]
    where the left-hand side denotes the norm in the underlying $C^*$-algebra. 
\end{theorem}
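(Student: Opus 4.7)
\textbf{Proof plan for Theorem \ref{thm:kemp-speicher}.} The strategy is the non-commutative moment method, leveraging the rigid combinatorial structure of mixed $*$-moments of free Haar unitaries. Set $a := p(u_1,\ldots,u_t)$. Because $aa^*$ is a positive self-adjoint element in a tracial $C^*$-probability space with faithful state $\phi$, the spectral-radius formula gives
\[
\|a\|^2 \;=\; \|aa^*\| \;=\; \lim_{k\to\infty}\, \phi\big((aa^*)^k\big)^{1/k}.
\]
It therefore suffices to show $\limsup_{k\to\infty} \phi\big((aa^*)^k\big)^{1/k} \le e(d+1)\|p\|_2^2$; taking the square root then yields the claim.

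Writing $a = \sum_{|I|=d} c_I\,u_I$ and $a^* = \sum_I c_I\,u_I^{*}$, one has
\[
\phi\big((aa^*)^k\big) \;=\; \sum_{I_1,\ldots,I_k,\,J_1,\ldots,J_k \in [t]^d} c_{I_1}c_{J_1}\cdots c_{I_k}c_{J_k}\; \phi\big(u_{I_1}u_{J_1}^{*}\cdots u_{I_k}u_{J_k}^{*}\big),
\]
where the word in $\phi$ has length $2kd$, organized into $2k$ alternating blocks of $d$ unstarred and $d$ starred letters. For free Haar unitaries, this mixed moment equals $1$ if the word reduces to the identity in the free group on $u_1,\ldots,u_t$, and $0$ otherwise. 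Reducibility is witnessed by at least one non-crossing pair partition $\pi$ of the $2kd$ positions matching each $u$-letter with a $u^{*}$-letter of the same index (so each pair is bipartite), and the triangle inequality gives
\[
\big|\phi\big((aa^*)^k\big)\big| \;\le\; \sum_{\pi \text{ admissible}} \;\sum_{\vec{I},\vec{J} \text{ compatible with }\pi} |c_{I_1}|\cdots|c_{J_k}|.
\]

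For each admissible $\pi$, the inner sum is a non-negative tensor contraction of $2k$ copies of $|T| = (|c_I|)_{I \in [t]^d}$ along the wiring prescribed by $\pi$. Using the outermost-pair recursive structure of non-crossing partitions, one peels off one pair at a time and applies Cauchy-Schwarz at each step to obtain the uniform per-partition bound $\|T\|_F^{2k} = \|p\|_2^{2k}$. It remains to count $N(k,d)$, the number of admissible partitions. The alternating block-of-$d$ pattern puts each admissible $\pi$ into bijection with a Fuss-Catalan family --- for instance, $(d+1)$-ary plane trees on $k$ internal nodes --- giving $N(k,d) \le \tfrac{1}{dk+1}\binom{(d+1)k}{k}$, which can be verified for small $d,k$ by direct enumeration. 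Stirling's formula then produces
\[
\lim_{k\to\infty} N(k,d)^{1/k} \;=\; \frac{(d+1)^{d+1}}{d^d} \;=\; (d+1)\left(1+\tfrac{1}{d}\right)^{d} \;<\; e(d+1),
\]
and combining with the per-partition bound yields $|\phi((aa^*)^k)|^{1/k} \le (1+o(1))\,e(d+1)\|p\|_2^2$, completing the proof.

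The main obstacle is the combinatorial enumeration in the last step: identifying the Fuss-Catalan bijection for admissible non-crossing pair partitions with the alternating block-of-$d$ structure, and extracting the sharp constant $e(d+1)$, which arises precisely from $\sup_d (1+1/d)^d = e$. A secondary subtlety is the per-partition Cauchy-Schwarz argument --- while conceptually a straightforward induction on the number of pairs, it requires careful tracking of which indices are \emph{internal} to a peeled-off pair versus shared with the remaining structure, especially for partitions with deep nesting. A final minor caveat is that the union bound over admissible partitions overcounts whenever an index tuple is compatible with more than one partition; absorbing this slack is harmless for the asymptotic bound we need, but an exact moment identity would require the Möbius/cumulant inversion that underlies the Kemp-Speicher framework.
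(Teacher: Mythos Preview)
Your proposal is correct and follows essentially the same route as the paper: the moment method via $\phi((aa^*)^k)$, identification of the nonzero moments with words reducing to the identity in the free group, the upper bound by a sum over non-crossing $*$-pairings, the Fuss--Catalan count $\frac{1}{dk+1}\binom{(d+1)k}{k}$, and the limit $(d+1)^{d+1}/d^d \le e(d+1)$. The only simplification the paper offers is that the per-partition bound requires no inductive peeling: since a $*$-pairing $\pi$ induces a bijection between the $dk$ unstarred index slots and the $dk$ starred ones, a single Cauchy--Schwarz together with the change of variables $I \mapsto J(I,\pi)$ on $[t]^{dk}$ gives $\sum_I |c_{I_1}\cdots c_{I_k}|\,|c_{J_1(I)}\cdots c_{J_k(I)}| \le \|p\|_2^{2k}$ directly, so your ``secondary subtlety'' evaporates.
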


For completeness, we  introduce the necessary free probability background and some combinatorial details in \appref{sec:app}, and we present the fairly short proof of \thmref{thm:kemp-speicher} (from \cite{KS05}) there in a self-contained way. We shall need to extend the above bound to non-homogeneous polynomials. Let $p(z_1, \ldots, z_t) = \sum_{|\ui| \le d} c_{\ui}z_{\ui}$ and  let $p_k(z_1, \ldots, z_t) = \sum_{|\ui| = k} c_{\ui}z_{\ui}$ denote the degree-$k$ homogeneous part of $p$. Writing $p_k = p_k(u_1, \ldots, u_t)$ for $0 \le k  \le d$ and $p = p(u_1, \ldots, u_t)$, it follows from the triangle inequality,  \thmref{thm:kemp-speicher}, and Cauchy-Schwarz, that
    \begin{align*}
        \ \|p\| &\le \sum_{k=0}^d \|p_k\| 
        \le 
        \sum_{k=0}^d\sqrt{e(k+1)}\|p_k\|_2
        \le
       \sqrt{e}\left(\sum_{k=0}^d (k+1)\right)^{1/2} \left(\sum_{k=0}^d  \|p_k\|^2_2\right)^{1/2} \leq \sqrt{e}(d+1)  \cdot\|p\|_2.
    \end{align*}
Thus, we essentially get the same bound as in the homogeneous case, at the expense of an additional $O(d^{1/2})$ factor.

Collecting all the above we have the following as a direct consequence:

\begin{theorem} \label{thm:op-norm}
    Let $p(z_1, \ldots, z_t) = \sum_{|\ui|\le d} c_{\ui}z_{\ui} $ be a non-commutative degree-$d$ polynomial in $\R\langle z_1, \ldots, z_t\rangle$ and $U_1, \ldots, U_t$ be independent $N \times N$ Haar random unitaries. Then, as $N \to \infty$, the following holds almost surely, 
    \[ \tr_N[p(U_1, \ldots, U_t) (p(U_1, \ldots, U_t))^*] =  \|p\|_2^2,\]
    and
    \[ \|p(U_1, \ldots, U_t)\|_{\op} \le \sqrt{e}(d+1)  \cdot \|p\|_2,\]
    Moreover, the factor $(d+1)$ in the operator norm bound can be improved to $\sqrt{d+1}$ if the polynomial is homogeneous.
\end{theorem}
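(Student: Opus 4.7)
The plan is to derive this theorem by directly combining the three preceding results in the section: Voiculescu's convergence theorem (\thmref{thm:voiculescu}), the trace identity (\lref{thm:trace}), and the Kemp--Speicher operator norm bound (\thmref{thm:kemp-speicher}). Since the work of proving each of these ingredients is done (or deferred to \appref{sec:app}), the proof of \thmref{thm:op-norm} should be essentially a synthesis step.

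For the first claim (the normalized trace identity), I would apply \thmref{thm:voiculescu} to the polynomial $q(z_1,\ldots,z_{2t}) := p(z_1,\ldots,z_t) \cdot p(z_1,\ldots,z_t)^*$ (where $z_{t+i}$ stands in for $z_i^*$), which yields that almost surely as $N \to \infty$,
\[
\tr_N\bigl[p(U_1,\ldots,U_t)\,p(U_1,\ldots,U_t)^*\bigr] \longrightarrow \phi\bigl[p(u_1,\ldots,u_t)\,p(u_1,\ldots,u_t)^*\bigr].
\]
Then \lref{thm:trace} identifies the right-hand side as $\|p\|_2^2$, finishing the trace claim.

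For the operator norm bound I would proceed analogously. By \thmref{thm:voiculescu}, almost surely $\|p(U_1,\ldots,U_t)\|_{\op} \to \|p(u_1,\ldots,u_t)\|$. It thus suffices to bound the $C^*$-norm on the free Haar unitary side. In the homogeneous case this is precisely \thmref{thm:kemp-speicher}, giving $\|p(u_1,\ldots,u_t)\| \le \sqrt{e(d+1)}\,\|p\|_2$. In the non-homogeneous case I would decompose $p = \sum_{k=0}^d p_k$ into its homogeneous components, apply the triangle inequality in the $C^*$-algebra, then Kemp--Speicher to each $p_k$, and finally Cauchy--Schwarz, exactly as carried out in the discussion just preceding the theorem statement:
\[
\|p(u_1,\ldots,u_t)\| \;\le\; \sum_{k=0}^d \sqrt{e(k+1)}\,\|p_k\|_2 \;\le\; \sqrt{e}\,(d+1)\,\|p\|_2,
\]
where the last step uses $\sum_{k=0}^d (k+1) \le (d+1)^2$ and Parseval-type identity $\sum_k \|p_k\|_2^2 = \|p\|_2^2$.

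Since both ingredients (\thmref{thm:voiculescu} and \thmref{thm:kemp-speicher}) are quoted as known, and \lref{thm:trace} is proved in \appref{sec:app}, there is no real obstacle — the proof is a one-line invocation for each of the two statements, with the only mildly delicate point being to verify that the polynomial $p \cdot p^*$ (with $u_i^*$'s appearing) fits in the framework of \thmref{thm:voiculescu}, which it does because that theorem is stated for polynomials in both the $z_i$'s and their adjoints. The main conceptual content of the theorem thus lies entirely in the cited results, and the write-up can simply be a couple of sentences explicitly plugging them together.
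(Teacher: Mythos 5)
Your proposal is correct and follows precisely the route the paper takes: the theorem is stated in the paper as a direct consequence of \thmref{thm:voiculescu}, \lref{thm:trace}, and \thmref{thm:kemp-speicher}, with the non-homogeneous case handled by decomposing $p=\sum_k p_k$, applying the triangle inequality and Kemp--Speicher to each homogeneous part, and then Cauchy--Schwarz, exactly as you describe. The only cosmetic note is that the first display in the theorem statement really asserts almost-sure convergence of $\tr_N[pp^*]$ to $\|p\|_2^2$ (rather than a literal equality at finite $N$), which you interpret correctly.
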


Based on the above theorem, we prove the following key lemma which captures the polar decomposition strategy mentioned in the earlier proof overview (\secref{sec:bh}). This will serve as the key ingredient in the proof of \thmref{thm:aa} and \thmref{thm:bh-intro}. 

\begin{lemma}\label{lem:polar}
    Let $p$ be a non-commutative degree-$d$ polynomial in $\R\langle y_1, \ldots, y_m, z_1, \ldots, z_t\rangle$ given by
    \[ p(y_1, \ldots, y_m, z_1, \ldots, z_t) = \sum_{i=1}^m y_i q_i(z_1, \ldots, z_t) + q_0(z_1, \ldots, z_t).\]
    Then, for every $\delta > 0$, there exist an integer $N$ and $N \times N$ unitaries $V_1,\ldots, V_m, W_1, \ldots, W_t$ such that 
    \[ \|p(V_1, \ldots, V_m, W_1, \ldots, W_t)\|_{\op} \ge \frac{1}{\sqrt{e}(d+1)} \sum_{i=1}^m \|q_i\|_2 - \delta.\]
    Moreover, the factor in front can be improved to $(e(d+1))^{-1/2}$ if $p$ is homogeneous. 
\end{lemma}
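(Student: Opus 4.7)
The plan is to implement in the non-commutative setting the polar-decomposition strategy sketched in Section~\ref{sec:bh}. Given $\delta > 0$, choose a large integer $N$ and let $W_1, \ldots, W_t$ be independent $N \times N$ Haar random unitaries. For each $i \in [m]$, take the left polar decomposition $q_i(W_1, \ldots, W_t) = U_i P_i$, where $U_i$ is unitary and $P_i = \sqrt{q_i(W)^* q_i(W)} \succeq 0$. The key observation is to consider the two sign choices $V_i^\pm := \pm U_i^*$: since $V_i^\pm q_i(W) = \pm P_i$ and $q_0(W)$ is independent of the $V_i$'s,
\[
p(V^+, W) - p(V^-, W) \;=\; 2\sum_{i=1}^m P_i \;\succeq\; 0,
\]
so the $q_0(W)$ contribution cancels. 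By the triangle inequality and the fact that a PSD matrix has operator norm at least its normalized trace,
\[
\max\bigl(\|p(V^+, W)\|_{\op},\, \|p(V^-, W)\|_{\op}\bigr) \;\ge\; \Big\|\textstyle\sum_i P_i\Big\|_{\op} \;\ge\; \sum_{i=1}^m \tr_N(P_i),
\]
reducing the task to a lower bound on $\sum_i \tr_N(P_i)$.

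Next I would lower-bound each $\tr_N(P_i)$ using Theorem~\ref{thm:op-norm}. The PSD inequality $P_i^2 \preceq \|P_i\|_{\op}\, P_i$ yields
\[
\tr_N(P_i) \;\ge\; \frac{\tr_N(P_i^2)}{\|P_i\|_{\op}} \;=\; \frac{\tr_N(q_i(W)\, q_i(W)^*)}{\|q_i(W)\|_{\op}}.
\]
Since each $q_i$ has degree at most $d-1$, Theorem~\ref{thm:op-norm} gives, almost surely as $N \to \infty$, $\tr_N(q_i(W)\, q_i(W)^*) \to \|q_i\|_2^2$ and $\|q_i(W)\|_{\op} \le \sqrt{e}\, d\,\|q_i\|_2 + o(1)$. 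In the limit this gives $\tr_N(P_i) \ge \|q_i\|_2/(\sqrt{e}\, d) \ge \|q_i\|_2/(\sqrt{e}(d+1))$. Taking the intersection of the finitely many almost-sure events (one for each $i \in [m]$), for $N$ sufficiently large there exists a realization of $W_1, \ldots, W_t$ (and hence a choice of $V_i^\pm$) for which the inequality holds up to an additive error smaller than $\delta$. For the homogeneous case, each $q_i$ is homogeneous of degree $d-1$, so the sharper bound in Theorem~\ref{thm:op-norm} replaces $\sqrt{e}\, d$ by $\sqrt{e\, d} \le \sqrt{e(d+1)}$, yielding the stated prefactor $1/\sqrt{e(d+1)}$.

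The only genuinely delicate step is the sign-flipping cancellation in the first display: without it, one would need to lower-bound $\|\sum_i P_i + q_0(W)\|_{\op}$, and the natural estimate via the normalized trace breaks down because $\tr_N(q_0(W))$ can be negative and of magnitude comparable to $\sum_i \tr_N(P_i)$. Using both $V^+$ and $V^-$ simultaneously neutralizes this, after which Theorem~\ref{thm:op-norm} together with the elementary $P^2 \preceq \|P\|_{\op} P$ trade-off does all the remaining work.
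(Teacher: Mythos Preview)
Your proposal is correct and follows the same overall strategy as the paper: substitute Haar random unitaries for the $z$-variables, take polar decompositions $q_i(W)=U_iP_i$, and combine the trace identity and operator-norm bound of \thmref{thm:op-norm} via $\tr_N(P_i)\ge \tr_N(P_i^2)/\|P_i\|_{\op}$.

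The one genuine difference is how the nuisance term $q_0(W)$ is handled. The paper also polar-decomposes $q_0(W)=U_0P_0$ and then sets $V_i=U_0U_i^*$, so that $p(V,W)=U_0\sum_{i=0}^m P_i$; unitary invariance gives $\|p(V,W)\|_{\op}=\|\sum_{i\ge 0}P_i\|_{\op}\ge \tr_N\!\big(\sum_{i\ge 1}P_i\big)$, and $P_0\succeq 0$ is simply dropped. Your sign-flip trick $V_i^\pm=\pm U_i^*$ instead eliminates $q_0(W)$ by subtraction and the triangle inequality, so you never need the polar decomposition of $q_0$ at all. Both devices are equally short; yours has the minor advantage that it requires no control whatsoever on $q_0$, while the paper's has the minor advantage of producing a single explicit tuple $(V_1,\dots,V_m)$ rather than the better of two. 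Your observation that $\deg q_i\le d-1$ for $i\ge 1$ also buys a slightly sharper constant ($\sqrt{e}\,d$ in place of $\sqrt{e}(d+1)$), which the paper does not bother to extract.
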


\begin{proof}[Proof of \lref{lem:polar}]
     For an arbitrary integer $N$, let us pick independent $N \times N$ Haar random unitaries $W_1, \ldots, W_t$ which we substitute for the variables $z_1,\ldots,z_t$, respectively, and let $M_i = q_i(W_1, \ldots, W_t)$ be the corresponding random matrices. Then, for any tuple of matrices $V_1, \ldots, V_m$ that we could substitute for the variables $y_1, \ldots, y_m$, we have that 
    \[ 
    p(V_1, \ldots, V_m, W_1, \ldots, W_t) = \sum_{i=1}^m V_i M_i + M_0.
    \] 
     \thmref{thm:op-norm} and union bound imply that as $N \to \infty$, with probability $1$ all the following events simultaneously hold: 
    \begin{itemize}
        \item $\|M_i\|_{\op} \le \sqrt{e}(d+1) \cdot \|q_i\|_2$ for each $i$,
        \item $\tr_N(M^*_iM_i) = \|q_i\|_2^2$ for each $i$, where $\tr_N(M)$ is the normalized trace.
    \end{itemize}
   To show that the operator norm must be large, let us fix a sufficiently large $N$ and a choice of $N\times N$ unitaries $W_1, \ldots, W_t$ such that $M_i$ satisfies $\|M_i\|_{\op} \le \sqrt{e}(d+1) \cdot \|q_i\|_2 + \epsilon$ and $\tr_N(M^*_iM_i) \ge \|q_i\|_2^2 - \epsilon$ for each $0\le i\le m$, where $\epsilon$ can be made arbitrarily small by increasing $N$. For $0 \leq i \leq m$, let $M_i = U_i P_i$ be the left polar decomposition of $M_i$, where $U_i$ is a unitary matrix and $P_i$ is a positive semidefinite matrix.
   
   We select the tuple of unitary matrices $V_1, \ldots, V_m$ that we substitute for the variables $y_1, \ldots, y_m$ to be $V_i = U_0U^*_i$ for $i \in [m]$. With this we have that $\|p(V_1, \ldots, V_m, W_1, \ldots, W_t)\|_{\op}$ is at least
    \begin{align*}
         \Big\|M_0 + \sum_{i=1}^m V_iM_i\Big\|_{\op} & = \Big\|U_0 P_0 + \sum_{i=1}^m U_0 U_i^* U_iP_i \Big\|_{\op} \\
        \ & =  \Big\|U_0 P_0 + \sum_{i=1}^m U_0 P_i\Big\|_{\op}  = \Big\| P_0 + \sum_{i=1}^m  P_i\Big\|_{\op}\ge \tr_N\Big(P_0 + \sum_{i=1}^m P_i\Big) \ge \tr_N\Big(\sum_{i=1}^m P_i\Big),
    \end{align*}
    where the last equality follows since the operator norm is unitarily invariant and the last two inequalities follow from the positive semidefiniteness of the $P_i$'s.

    For every positive semidefinite matrix $P$, we have that $\tr_N(P) \ge {\tr_N(P^2)}/{\|P\|_{\op}}$. 
  
    Hence,
     \[ \|p(V_1, \ldots, V_m, W_1, \ldots, W_t)\|_{\op} \ge \sum_{i=1}^m \frac{\tr_N(P_i^2)}{\|P_i\|_{\op}}.\]
     By our choice of $M_i$, we have that $\tr_N(P_i^2) = \tr_N(M_i^* M_i) \ge \|q_i\|_2^2 - \eps$ and $\|P_i\|_{\op} = \|M_i\|_{\op} \le \sqrt{e}(d+1)\|q_i\|_2 + \eps$. Since $\eps$ can be made arbitrarily small by increasing $N$, it follows that 
      \[ \|p(V_1, \ldots, V_m, W_1, \ldots, W_t)\|_{\op} \ge \frac1{\sqrt{e}(d+1)} \sum_{i=1}^m \|q_i\|_2 - \delta ,\]
     for large enough $N$. The improved bound for the homogeneous case follows directly by plugging the bound of \thmref{thm:op-norm} into the above proof.
\end{proof}

\subsection{Non-commutative root-influence inequality}
\label{sec:bh-proof}

For clarity in the proofs below, we remind our  convention that all tuples or blocks are denoted with boldface fonts (e.g. $\BU_1$ or $\BA$), while a single element is denoted without boldface (e.g. $U_1(i)$ or $A_i$ or $A$). Before proceeding with the proof, we restate the statement for convenience.

\bh*

\begin{proof}[Proof of \thmref{thm:bh-intro}] 
Since $f$ is homogeneous, we can write
   \begin{align*}
    f(\x_1,\ldots, \x_d) &= \sum_{i_1, \ldots, i_d \in [n]} \hf_{i_1, \ldots, i_d} ~x_1(i_1)x_2(i_2)\cdots x_d({i_d}) \\
    \ & = \sum_{i=1}^n  x_1(i) \underbrace{\left(\sum_{i_2,\ldots, i_d \in [n]} \hf_{i_1, \ldots, i_d} ~x_2(i_2)\cdots x_d({i_d})\right)}_{\textstyle := f_i(\x_2,\ldots, \x_d)}.
\end{align*}
 In this case, it follows from \eqref{eqn:inf-tensor} that for each $i \in [n]$, we have 
 \begin{equation}\label{eqn:var}
     \ \Var[f_i] = \|f_i\|^2_2 = \inf_{1,i}(f) \text{ and }  \Var[f] = \sum_{i=1}^n \inf_{1,i}(f).
 \end{equation}

  Let us denote the corresponding non-commutative block-multilinear polynomials by $f(\BU_1, \ldots, \BU_d)$ and $f_i(\BU_2, \ldots,\BU_d)$ where $\BU_b = (U_b(1), \ldots, U_b(n))$ denotes the $b^\text{th}$ block of non-commutative variables. To show a lower bound on $\cbnorm{f}$ it suffices to exhibit a collection of square matrices $\{U_b(i)\}_{b\in [d], i \in [n]}$ with operator norm at most~1, such that $\|f(\BU_1, \ldots, \BU_d)\|_{\op}$ is large.

Applying \lref{lem:polar} for the homogeneous case (with $p = f$, $q_i=f_i$ for $i \in [n]$, and $q_0=0)$, it follows that for every $\delta > 0$ there exists an integer $N$ and a choice of tuples of $N \times N$ unitaries $\BU_1, \ldots, \BU_d$ such that  
      \[ \cbnorm{f} \ge \|f(\BU_1, \ldots, \BU_d)\|_{\op} \ge \frac1{\sqrt{e(d+1)}} \sum_{i\in [n]} \|f_i\|_2  -\delta \stackrel{\eqref{eqn:var}}{\ge}  \frac{1}{\sqrt{e(d+1)}} \left(\sum_{i=1}^n \sqrt{\Inf_{1,i}(f)} \right) -\delta.\]
Taking $\delta \to 0$, we get the statement of the lemma. The proof for the inequality when $b=d$ is the last block follows similarly by using the right polar decomposition.
\end{proof}

\subsection{Aaronson-Ambainis Conjecture for non-homogeneous forms}

In this section, we prove \thmref{thm:aa}, which requires handling non-homogeneous forms. The proof will be similar to the proof of \thmref{thm:bh-intro} but we will need to be careful about certain details. 

\begin{proof}[Proof of \thmref{thm:aa}]
Any block-multilinear polynomial $f(x_1, \ldots, x_d)$ can be written as 
\begin{align*}
    f(\x_1,\ldots, \x_d) &= \BE f + \sum_{b\in [d]} f_b(\x_b, \x_{b+1}, \ldots, \x_d),
\end{align*}
where $f_b$ consists of all monomials of $f$ that start with a variable in the $b^\text{th}$ block $\x_b$. Note that $f_b$ depends only on the variables in blocks $\x_b, \x_{b+1},\ldots, \x_d$. Moreover, it follows from \eqref{eqn:inf-tensor} that 
 \begin{equation}\label{eqn:var-general}
     \ \Var[f] = \sum_{b \in [d]} \|f_b\|_2^2 = \sum_{b \in [d]} \Var[f_b],
 \end{equation}
so there exists a block $\beta \in [d]$ such that $\Var[f_{\beta}] \ge \frac{1}{d}\Var[f]$. 

Since $f_{\beta}$ contributes a lot to the variance, it is natural to try to find an influential variable in the block $\x_{\beta}$. Towards this end,  we pull out the variables $x_{\beta}(i)$ and write
\begin{align*}
    f_{\beta}(\x_{\beta},\ldots, \x_d) &= \sum_{i\in [n]} x_{\beta}(i) f_{\beta,i}(\x_{\beta+1}, \ldots, \x_d),
\end{align*}
for block-multilinear polynomials $f_{\beta,i}(\x_{\beta+1}, \ldots, \x_d)$. Note that some of the $f_{\beta,i}$'s could be identically zero, so let us define $S$ to be the set of those $i$ such that $f_{\beta,i}$ is non-zero. We note that
\begin{align} \label{eqn:part-inf}
  \|f_{\beta,i}\|_2^2  =  \Inf_{\beta,i}(f_{\beta}) \le \Inf_{\beta,i}(f)  
\end{align}
which implies that
\begin{align}\label{eqn:var-main}
    \frac{1}{d} \Var[f] \le \Var[f_{\beta}] = \sum_{i \in S}\|f_{\beta,i}\|_2^2 = \sum_{i \in S} \Inf_{\beta,i}(f_{\beta}).
\end{align}
\begin{sloppypar}
Denote the corresponding non-commutative block-multilinear polynomials by $f(\BU_1, \ldots, \BU_d)$,  $f_b(\BU_{b}, \ldots,\BU_d)$, and $f_{\beta}(\BU_{\beta+1}, \ldots,\BU_d)$ where $\BU_b = (U_b(1), \ldots, U_b(n))$ denotes the $b^\text{th}$ block of non-commutative variables. To show a lower bound on $\cbnorm{f}$ it suffices to exhibit a collection of square matrices $\{U_b(i)\}_{b\in [d], i \in [n]}$ with operator norm at most~1 such that $\|f(\BU_1, \ldots, \BU_d)\|_{\op}$ is large.
\end{sloppypar}
  
 We set the matrices in blocks $\BU_1, \ldots, \BU_{\beta-1}$ to be zero (that is, the all-zero matrix $\BZ$). Note that with this choice all polynomials $f_b(\U_b, \ldots, \U_d)$ where $b < \beta$ vanish and the non-commutative polynomial becomes 
 \[ f(\BZ, \ldots, \BZ, \BU_{\beta}, \BU_{\beta+1}, \ldots, \BU_d) = \sum_{i\in S} U_{\beta}(i) f_{\beta,i}(\BU_{\beta+1}, \ldots, \BU_d) + \sum_{b=\beta+1}^d f_b(\BU_b, \BU_{b+1}, \ldots, \BU_d) + \Ef,\]
  which is a non-commutative polynomial of the form considered in \lref{lem:polar} (with $m = |S|$, $q_i = f_{\beta,i}$ and $q_0 = \sum_{b=\beta+1}^d f_b + \Ef$). Thus, by \lref{lem:polar} for every small $\delta>0$ there exists an integer $N$ and a choice of $N \times N$ matrices for the blocks $\BU_{\beta},\ldots, \BU_d$ such that 
        \begin{align*}
             \ \cbnorm{f} & \ge \|f(\BZ, \ldots, \BZ, \BU_{\beta}, \BU_{\beta+1}, \ldots, \BU_d)\|_{\op} & \\
             \  & \ge \frac1{\sqrt{e}(d+1)} \sum_{i\in S} \|f_{\beta,i}\|_2 -\delta  \stackrel{\eqref{eqn:part-inf}}{=}  \frac{1}{\sqrt{e}(d+1)} \left(\sum_{i \in S} \sqrt{\Inf_{\beta,i}(f_{\beta})} \right) -\delta & \\
             \ &\stackrel{\eqref{eqn:var-main}}{\ge}  \frac{1}{\sqrt{e}(d+1)} \left( \frac{\sum_{i \in S} \Inf_{\beta,i}(f_{\beta})}{\sqrt{\maxinf(f)}} \right) -\delta  \stackrel{\eqref{eqn:part-inf}}{\ge}  \frac{1}{\sqrt{e}(d+1)^{2}} \left( \frac{\Var[f]}{ \sqrt{\maxinf(f)}} \right) -\delta
        \end{align*}
        Taking $\delta \to 0$ and using the assumption that $\|f\|_{\cb} \le 1$, we obtain the statement of the theorem:
     \[
     1\geq \cbnorm{f} \ge \frac{1}{\sqrt{e}(d+1)^{2}} \cdot \frac{\Var[f]}{\sqrt{\maxinf(f)}} \implies \maxinf(f) \ge  \frac{(\Var[f])^2}{e(d+1)^4}. \qedhere
     \]
\end{proof}

\subsection{Approximating completely bounded forms with decision trees}

In this section, we briefly mention how to obtain \corref{cor:sim}.
Aaronson and Ambainis \cite[Theorem 3.3]{AA14} showed that querying the most influential variable reduces the variance of the function~$f$, and if that influence is lower bounded by a polynomial in $\Var[f]/d$, then after $\poly(d)$ queries (the exact quantitative dependence can be read off from their proof), the variance of the function becomes small enough so that it can be approximated almost-everywhere by its expectation.  Since the family of degree-$d$ block-multilinear forms with completely bounded norm at most one is closed under restrictions, one can apply \thmref{thm:aa} repeatedly. This gives us \corref{cor:sim}.
\section{Discussion and Open Problems} 
\label{sec:open}

To prove \conjref{conj:folklore} in full generality, one would need to consider arbitrary quantum query algorithms: such an algorithm operating on an input $z \in \pmone^m$ always makes queries to the same oracle $O_z$ (with a control qubit possibly). One can always convert any such algorithm to the type given in \figref{fig:quantum} by replacing the oracle $O_z$ used at each step $b$ with a new oracle $O_{\x_b}$ where $\x_b \in \pmone^{m+1}$. The execution of the original algorithm can then be recovered by substituting $\x_b = (z,1)$ for every $b \in [d]$.
As such one can always obtain a completely bounded block-multilinear form associated with any quantum query algorithm. Conversely, the work \cite{ABP18} shows that the existence of a degree-$2d$ homogeneous block-multilinear form $F : \pmone^{(m+1) \times 2d}$   with completely bounded norm at most one also implies the existence of a $d$-query quantum algorithm whose bias is given by $F((z,1), \ldots, (z,1))$ on every input $z \in \pmone^m$. Thus, completely bounded homogenous block-multilinear forms fully characterize quantum query algorithms in this sense.

In many works in quantum query complexity that concern worst-case complexity, understanding completely bounded or bounded block-multilinear polynomials is sufficient to prove lower bounds as well as give worst-case classical simulation results (i.e.\ for all inputs), see for instance \cite{AA:forrelation, BGGS21}. However, a transformation that converts a general quantum query algorithm to the type shown in \figref{fig:quantum} is not conducive to the almost-everywhere results considered in this paper, as the size of the input domain increases exponentially and the number of relevant inputs (i.e.\ where each $\x_b$ is set to the same $(z,1)$) becomes an exponentially small fraction of the new domain. 

It thus remains an intriguing open problem to see if the characterization of \cite{ABP18} can be used to make further progress on \conjref{conj:folklore}. One can also hope to make progress on \conjref{conj:folklore} without relying on the connection via influences ---  recently, Aaronson, Ingram and Kretschmer \cite{AIK21} managed to directly prove \conjref{conj:folklore} for the special case where the quantum algorithm queries a sparse oracle, without first proving a special case of \conjref{conj:aa-inf}.

Another interesting direction is to show that the Aaronson-Ambainis conjecture holds for bounded block-multilinear polynomials, that is, polynomials whose sup-norm on the Boolean hypercube is at most one. While this by itself does not suffice for the application to quantum algorithms as explained above, it might pave the way towards \conjref{conj:aa-inf} in full generality. Lastly, the free-probability toolbox has already found several applications in quantum information theory (see e.g.~\cite{Yin:freeprob,CollinsNechita}), and we hope this work will stimulate more applications elsewhere as well.

\paragraph{Acknowledgments.} We thank Scott Aaronson, Srinivasan Arunachalam, Jop Bri\"et and Ryan O'Donnell for helpful comments.

\bibliographystyle{alpha}
\bibliography{refs}
\appendix
\section{Free Probability Primer}
\label{sec:app}

There are many excellent books on free probability theory. In particular, we refer to the book \cite{NS06} for more details than the brief introduction given here.

\subsection{Preliminaries}
\label{sec:free}

\paragraph{$C^*$-algebras.} Let $\CA$ be a unital $C^*$-algebra. For our purposes, we can think of this as an algebra of bounded operators on a complex Hilbert space which is self-adjoint ($a \in \CA$
 implies $a^* \in \CA$), closed in the operator norm $\|\cdot\|$, and contains the identity ($\bone \in \CA$). A faithful trace $\phi$ on $\CA$ is a continuous linear functional $\phi: \CA \to \C$ that is  unital ($\phi(\bone)=1$), positive $\phi(aa^*) \ge 0$, and $\phi(aa^*) = 0$ iff $a=0$. 
 
 The pair $(\CA, \phi)$ where $\CA$ is a unital $C^*$-algebra and $\phi$ is a faithful trace on $\CA$ is called a $C^*$-\emph{probability space}. Elements of $\CA$ are called non-commutative random variables. An example of a $C^*$-probability space is the class $(M_n(\C), \tr_n)$, which is the class of $n \times n$ complex matrices with the normalized trace  functional defined as $\tr_n(M) = \frac{1}{n} \sum_{i=1}^n {M_{ii}}$. General $C^*$-probability spaces allow us to extend these definitions to infinite-dimensional operators, which are needed to define a non-commutative analog of independence called \emph{free independence}. Faithfulness of the trace $\phi$ then ensures that $\|a\|= \lim_{m \to \infty} \phi((aa^*)^m)^{1/2m}$ (see \cite[Proposition 3.17]{NS06}). In particular, this allows one to compute the norm $\|\cdot\|$ by using the trace method and taking higher powers of the trace functional $\phi$, as we will see below.

\paragraph{Free Independence.} Let $(\CA, \phi)$ be a $C^*$-probability space and let $\{\CA_i\}_{i=1}^n$ be unital $*$-subalgebras of  $\CA$. They are said to be \emph{free} (or \emph{freely independent}) if for all $k \in [n]$, for all indices $i_1, \ldots, i_k \in [n]$, and for all $a_1 \in \CA_{i_1}, \ldots, a_k \in \CA_{i_k}$ satisfying $\phi(a_1)=\ldots =\phi(a_k)=0$, the joint \emph{free moment},
\[ \phi(a_1 \cdots a_k) = 0\]
whenever $j_1 \neq j_2, j_2\neq j_3, \ldots, j_{k-1} \neq j_k$, that is, the free moments vanish when all the neighboring elements in the sequence $a_1, \ldots, a_k$ come from  subalgebras with distinct indices, for example, $\phi(a_1a_2a^*_1a^*_2a_3a_2)=0$.

{Non-commutative random variables $a_1, \ldots, a_n \in (\CA, \phi)$ are said to be free if the subalgebras $\{\CA_i\}_{i=1}^n$ are free, where $\CA_i$ is the unital $*$-subalgebra  generated by $a_i$ (the linear span of all monomials $a^{\eps_1}_ia^{\eps_2}_i\cdots a^{\eps_r}_i$ where $\eps_1, \ldots, \eps_r \in \{1,*\}$ and $r \in \N \cup \{0\}$). Note that the corresponding unital $C^*$-subalgebras obtained by taking the norm closure of each $\CA_i$ are also freely independent in this case (see \cite[Exercise 5.23]{NS06}).}

We remark that the set of free non-commutative random variables is an empty set if the underlying $C^*$-probability space is finite (for instance $(M_n(\C), \tr_n)$), so to find non-trivial examples one needs to work with infinite-dimensional $C^*$-probability spaces. 

\paragraph{Free Haar Unitaries and Free Groups.} Let $(\CA, \phi)$ be a $C^*$-probability space. An element $u \in \CA$ is a \emph{Haar unitary} if it is a unitary, i.e.\ $uu^*=u^*u= \bone$, and if $\phi(u^k) = 0$ for all non-zero integers $k$. A family $S = \{u_1, \ldots, u_n\} \in \CA$ in a $C^*$-probability space $(\CA, \phi)$ is called a \emph{free Haar unitary family} if each $u \in S$ is a Haar unitary and if $u_1, \ldots, u_n$ are free. For notational convenience, let us define $S^* = \{u^*_1, \ldots, u^*_n\}$ to be the set of corresponding adjoints.

One can give a very precise condition when the trace $\phi$ evaluated on a non-commutative monomial in the $u_i$'s vanishes in terms of the free group. The \emph{free group} $F_n$ with generating set $S$ is an infinite discrete group constructed as follows: a word is defined to be product of elements of $S \cup S^*$ with $\bot$ denoting the empty word that contains no symbols. A word is called reduced if it does not contain a sub-word of the form $g g^{*}$ or $g^{*} g$ for $g \in S$. Given a word that is not reduced, the process of repeatedly removing such sub-words until it becomes reduced is called reduction. The free group $F_n$ consists of all reduced words that can be built from the symbols in $S \cup S^*$ with the group operation being a product of words followed by reduction. The identity is the empty word $\bot$. 

For a $d$-tuple $\bi  = (i_1, \ldots, i_d) \in [m]^d$, let $u_{\bi}$ denote the non-commutative monomial $u_{i_1}\cdots u_{i_d}$ and write $u^*_{\bi} = (u_{\bi})^* = u^*_{i_d} \cdots u^*_{i_1}$. Let $\bi_1,\ldots, \bi_t, \bj_1, \ldots, \bj_t$ each be a $d$-tuple in $[m]^d$ and consider the degree-$2td$ non-commutative monomial $w = u^{}_{\bi_1}u^*_{\bj_1} u^{}_{\bi_2}u^*_{\bj_2}\cdots u^{}_{\bi_t}u^*_{\bj_t}$. Note that a degree-$2td$ monomial $w$ corresponds to an ordered $2td$-tuple of variables. To illustrate, if $t=1, m=3$ and $\bi_1=(1,2,3)$ and $\bj_1 = (2,2,1)$, then $w = u_{1}u_2u_3(u_2u_2u_1)^* = u_{1}u_2u_3u_1^*u^*_2u^*_2$ and corresponds to the ordered tuple $(u_1, u_2, u_3, u^*_1, u^*_2, u_2^*)$. We can also interpret $w$ as a word in the free group by applying the reduction rules. Then the next proposition follows from the definitions of free independence and Haar unitaries.

\begin{proposition} \label{prop:word}
    $\phi(w) = 1$ iff $w$ reduces to identity in the free group $F_n$, and $\phi(w) = 0$ otherwise.
\end{proposition}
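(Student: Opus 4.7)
The plan is to handle the two directions of the iff separately, with the forward direction being essentially immediate from unitarity and the reverse direction following from free independence applied to a suitable grouping.

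First, suppose $w$ reduces to the identity $\bot$ in $F_n$. The reduction procedure only involves successively deleting sub-words of the form $u_j u_j^*$ or $u_j^* u_j$, and inside the algebra $\CA$ each such sub-word equals $\bone$ because $u_j$ is a unitary. Hence applying these cancellations inside the algebra one by one shows that $w = \bone$ as an element of $\CA$, and therefore $\phi(w) = \phi(\bone) = 1$ by unitality of the trace.

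Next, suppose $w$ does not reduce to the identity. Let $w'$ be its full reduction in $F_n$, which is a non-empty reduced word; again $w = w'$ as elements of $\CA$ by the same cancellation argument, so it suffices to show $\phi(w') = 0$. I would now group consecutive letters of $w'$ that share the same generator index. Because $w'$ is reduced, no two adjacent letters are $u_j u_j^*$ or $u_j^* u_j$, and therefore each maximal same-index block consists either entirely of $u_j$'s or entirely of $u_j^*$'s. Hence one can write
\[
w' = u_{j_1}^{m_1}\, u_{j_2}^{m_2}\, \cdots\, u_{j_s}^{m_s},
\]
with non-zero integer exponents $m_i \neq 0$ and adjacent indices distinct, $j_i \neq j_{i+1}$.

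Each factor $u_{j_i}^{m_i}$ lies in the unital $*$-subalgebra $\CA_{j_i}$ generated by $u_{j_i}$, and the Haar-unitary condition $\phi(u_{j_i}^{k}) = 0$ for all non-zero integers $k$ gives $\phi(u_{j_i}^{m_i}) = 0$. Since the generators $u_1,\dots,u_n$ are free by hypothesis, the subalgebras $\CA_1,\dots,\CA_n$ are free, and the alternation condition $j_i \neq j_{i+1}$ is exactly the hypothesis needed to apply the defining vanishing property of free independence. This yields $\phi(u_{j_1}^{m_1} \cdots u_{j_s}^{m_s}) = 0$, hence $\phi(w) = \phi(w') = 0$, completing the proof.

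The argument is almost routine; the only mild subtlety — and the only place a careless write-up could go wrong — is the grouping step. One must verify that within each maximal same-index block the exponents really do combine into a single non-zero power of one generator (so that each grouped factor is non-trivially centered), and this is exactly what reducedness of $w'$ guarantees.
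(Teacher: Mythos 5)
Your argument is correct and is exactly the standard deduction the paper has in mind (the paper states the proposition follows ``from the definitions of free independence and Haar unitaries'' without spelling it out): cancel $u_j u_j^\ast = \bone$ in the algebra to reduce to $w'$, group the reduced word into maximal same-index runs (each a nonzero power, hence centered by the Haar condition), and apply the defining vanishing property of freeness to the alternating centered factors. The only potential pitfall you flag — that reducedness forces each maximal same-index run to be a pure power of $u_j$ or of $u_j^\ast$ — you handle correctly.
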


 For a monomial $w$ that reduces to identity in the free group, the procedure for reducing a monomial $w$ as above first removes some adjacent pair $u_k$ (at index $i$) and $u^*_k$ (at index $j)$, then removes another adjacent pair $u_l$ and $u^*_l$ in the resulting word and so on and so forth until we reach the empty word. In particular, this reduction procedure produces a pairing of the set $[2td]$ where the index $i$ and $j$ are paired up iff the  variables at indices $i$ and $j$ in the monomial $w$ are $u_k$ and $u^*_k$ (for some $k$). Moreover, this pairing is what is called a \emph{non-crossing} pairing defined below (see \figref{fig:non-crossing}). Note that a monomial could be reduced to identity in different ways, so there could be many such non-crossing pairings for a given monomial $w$.

\begin{figure}[h!]
    \centering
   \includegraphics[width=0.6\textwidth]{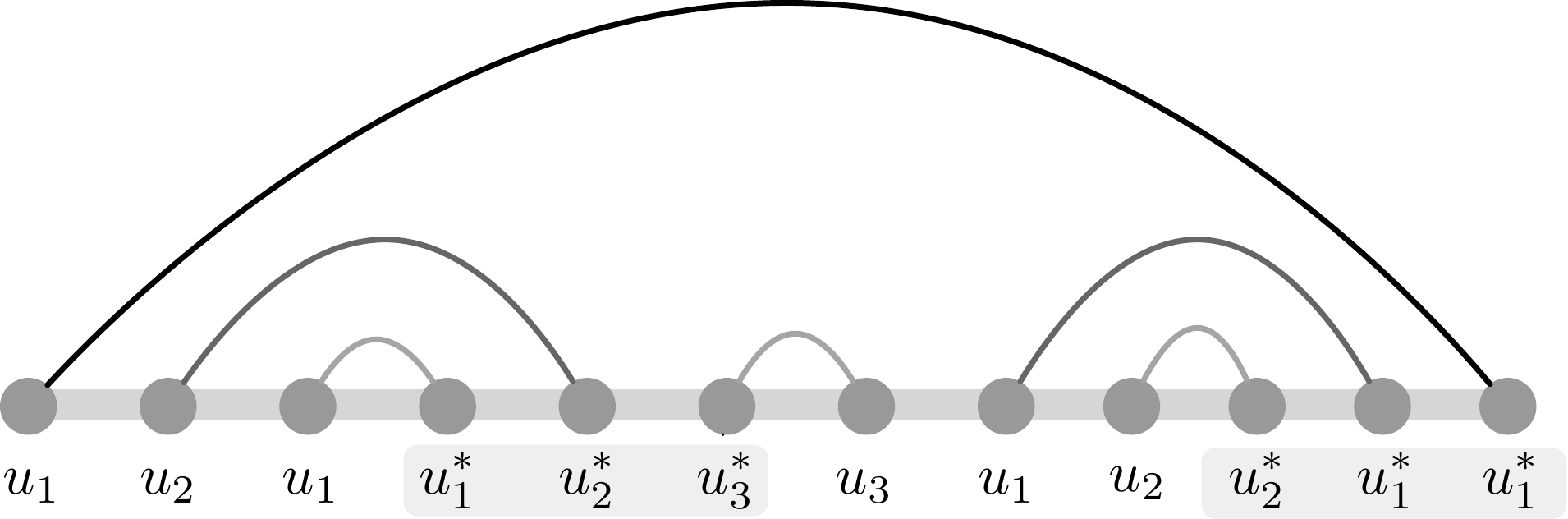}
   \caption{\footnotesize A non-crossing $*$-pairing resulting from the reduction of a word to identity in the free group}
    \label{fig:non-crossing}
\end{figure}

\paragraph{Non-crossing Pairings.}  For any even integer $n$, let $\CP_2(n)$ denote the set of all pairings of $n$, that is, the set of all partitions of $[n]$ where each block is of size two. Let $\NC_2(n) \subseteq \CP_2(n)$ denote the set of all  pairings of $[n]$ that are non-crossing,\emph{ i.e.} pairings  which do not contain blocks $\{i_1,i_3\}, \{i_2, i_4\}$ such that $i_1 < i_2 < i_3 < i_4$. 

For integers $d,m$, we divide the set $[2dm]$ into $2m$ consecutive blocks of $d$ elements each and color consecutive blocks alternatively with red and blue. Formally, for $i \in [2m]$, the elements $\{(i-1)d+1,\ldots, id\}$ are colored red if $i$ is odd and blue if $i$ is even. We define $\NC_2^*(d,m) \subseteq \NC_2(2dm)$ to be the set of those non-crossing pairings of $[2dm]$ which only pair up elements of different colors. We call any pairing in $\NC_2^*(d,m)$ a $*$-pairing.

We shall need the following combinatorial fact about the number of $*$-pairings (see \cite[Corollary 3.2]{KS05}).

\begin{lemma}\label{lem:catalan-fuss}
    For all $d, m$, the number of $*$-pairings  $|\NC_2^*(d,m)|$ equals the Fuss-Catalan number
    \[ 
    C_{d,m} = \frac{1}{m}\binom{m(d+1)}{m-1} = O\left(\frac{(d+1)^{m(d+1)}}{\left(d+\frac1m\right)^{md+1}}\right).
    \] 
\end{lemma}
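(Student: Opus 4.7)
The plan is to first establish the exact count $|\NC_2^*(d,m)| = \frac{1}{m}\binom{m(d+1)}{m-1}$ (a classical Fuss--Catalan number), and then derive the asymptotic estimate via Stirling's formula. For the exact count, I would use generating functions together with Lagrange inversion. Let $T(z) = \sum_{m \ge 0} a_m z^m$ where $a_m := |\NC_2^*(d,m)|$ (with the convention $a_0 = 1$). The goal is to show that $T$ satisfies the functional equation
\[ T(z) \;=\; 1 + z \cdot T(z)^{d+1}, \]
which is the defining equation for the generating function of the Fuss--Catalan numbers $C_{d,m}$. Lagrange inversion then immediately yields $a_m = \frac{1}{m}[y^{m-1}](1+y)^{m(d+1)} = \frac{1}{m}\binom{m(d+1)}{m-1}$.

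To set up the recursion, I would take $\pi \in \NC_2^*(d,m)$ with $m \ge 1$ and examine the arcs of $\pi$ emanating from the first red segment $\{1,\dots,d\}$. Since $\pi$ is non-crossing and pairs red with blue, these $d$ arcs are properly nested, so their blue endpoints satisfy $b_1 > b_2 > \cdots > b_d$. They partition the remaining positions into $d+1$ ``annular'' regions (inside the innermost arc, between consecutive arcs, and outside the outermost arc). The key combinatorial step is to verify that each region, after relabeling, inherits the alternating red/blue block-of-$d$ structure and so is itself a $*$-pairing of type $\NC_2^*(d, m_i)$ for some $m_i \ge 0$ with $\sum_{i=1}^{d+1} m_i = m-1$. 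This yields the recursion $a_m = \sum_{m_1 + \cdots + m_{d+1} = m-1} a_{m_1} \cdots a_{m_{d+1}}$, which is equivalent to the functional equation above.

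The main obstacle is verifying that each annular region really has the required alternating-block structure. This hinges on tracking how the endpoints $b_1, \dots, b_d$ split the $2m$ colored blocks among the regions: the nesting combined with the red--blue matching constraint forces each region to contain a whole number of consecutive (red, blue)-block pairs (otherwise one would have to pair a red with a red or a blue with a blue somewhere inside). An alternative and perhaps cleaner route would be to construct an explicit bijection between $\NC_2^*(d,m)$ and the set of non-crossing partitions of $[m(d+1)]$ with all blocks of size $d+1$, which by Edelman's formula is also counted by $\frac{1}{m}\binom{m(d+1)}{m-1}$.

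For the asymptotic bound, the plan is routine: applying Stirling's approximation $n! = \Theta(\sqrt{n})(n/e)^n$ to each of $(m(d+1))!$, $m!$, and $(md+1)!$, the $e$-factors cancel and the binomial/square-root ratios are polynomial in $d$ and $m$, leaving the main contribution
\[ \frac{(m(d+1))^{m(d+1)}}{m^m \cdot (md+1)^{md+1}} \;=\; \frac{(d+1)^{m(d+1)}}{m \cdot (d + 1/m)^{md+1}}, \]
which, together with the polynomial-in-$d,m$ prefactors absorbed into the $O(\cdot)$, gives the claimed $O\!\left(\frac{(d+1)^{m(d+1)}}{(d + 1/m)^{md+1}}\right)$ bound.
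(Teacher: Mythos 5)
The paper does not actually prove this lemma; it cites it directly to Kemp--Speicher \cite[Corollary 3.2]{KS05}, so any proof you supply is going beyond what the paper itself does. Your generating-function strategy (decompose at the first red block, derive the Fuss--Catalan equation $T = 1 + zT^{d+1}$, close with Lagrange inversion, then Stirling) is sound, and your alternative route---a bijection between $\NC_2^*(d,m)$ and non-crossing partitions of $[m(d+1)]$ into blocks of size $d+1$---is essentially what Kemp and Speicher do.

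There is, however, one substantive imprecision in the decomposition step, at exactly the spot you flag as ``the main obstacle'' and then resolve only heuristically. You assert that the red--blue matching constraint forces each annular region ``to contain a whole number of consecutive (red, blue)-block pairs.'' This is literally true only for the innermost region, where the balance condition forces $b_d$ to be the \emph{first} position of some blue block (so that region reads $B^d R^d \cdots B^d R^d$, i.e.\ full blocks, up to swapping the roles of the two colors), and for the outermost region, where $b_1$ must be the \emph{last} position of some blue block. For the $d-1$ intermediate regions $\{b_{j+1}+1,\dots,b_j-1\}$, the balance condition instead forces $b_d, b_{d-1},\dots, b_1$ to sit at offsets $1,2,\dots,d$ inside their respective blue blocks, and the resulting color word is $B^{\,j}(R^dB^d)^{t}R^dB^{\,d-j}$ for some $t\ge 0$: a \emph{cyclic rotation} of $(R^dB^d)^{t+1}$ that begins and ends with partial blue blocks, not a concatenation of whole $(R^d,B^d)$ pairs. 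To conclude that this region still supports exactly $|\NC_2^*(d,m_i)|$ sub-pairings, you need the extra (true, but not stated) observation that both the non-crossing condition and the opposite-color condition are invariant under cyclic rotation of the ground set. Once that is inserted, the recursion $a_m = \sum_{m_1+\cdots+m_{d+1}=m-1} a_{m_1}\cdots a_{m_{d+1}}$ and the equivalence with $T=1+zT^{d+1}$ are correct, Lagrange inversion gives $a_m=\frac{1}{m}\binom{m(d+1)}{m-1}$, and the Stirling estimate you wrote is routine and correct (the exponential factors cancel up to a constant since $m(d+1)=m+(md+1)-1$, and the polynomial prefactors are absorbed into the $O(\cdot)$).
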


\subsection{Proofs of \lref{thm:trace} and \thmref{thm:op-norm}}

\begin{proof}[Proof of \lref{thm:trace}]
    Writing $u^*_\bi = (u_\bi)^*$ for a tuple $\bi$ and using linearity of $\phi$, we have that 
    \[ 
    \phi[p(u_1, \ldots, u_t) (p(u_1, \ldots, u_t))^*] =  \sum_{|\bi|,|\bj| \le d} c_{\bi}c_{\bj} \phi(u_{\bi}u^*_{\bj}).
    \]
    From \pref{prop:word}, the term $\phi(u_{\bi}u^*_{\bj})$ is 1 iff $u_{\bi}u^*_{\bj}$ reduces to identity in the free group $F_t$ with generators $u_1, \ldots, u_t$. For the right-hand side above, this only happens when $\bi=\bj$ and thus these are the only non-zero terms. Thus, 
    \[ \phi[p(u_1, \ldots, u_t) (p(u_1, \ldots, u_t))^*] =  \sum_{|\bi| \le d} |c_{\bi}|^2. \qedhere\]
\end{proof}

Below we present the argument of Kemp and Speicher \cite{KS05}. Our exposition follows their proof closely but we adapt it to our context.
\begin{proof}[Proof of \thmref{thm:kemp-speicher}]
     We have that $\|p\| = \lim_{m \to \infty} \left(\phi((pp^*)^m)\right)^{1/(2m)}$ by the faithfulness of the trace $\phi$. Writing $u^*_\bj = (u_\bj)^*$ for a tuple $\bj$, we can compute 
    \begin{align*}
         \phi((pp^*)^m) = \sum_{\substack{|\bi_1|=\ldots=|\bi_m|=d\\|\bj_1|=\ldots=|\bj_m|=d}} c_{\bi_1}\cdots c_{\bi_m}c_{\bj_1}\cdots c_{\bj_m} \phi(u^{}_{\bi_1}u^*_{\bj_1} \cdots u^{}_{\bi_m}u^*_{\bj_m}).
    \end{align*}
    
    Since $u_1, \ldots, u_t$ are free Haar unitaries, \pref{prop:word} implies that $\phi(u^{}_{\bi_1}u^*_{\bj_1} \cdots u^{}_{\bi_m}u^*_{\bj_m})$ is 1 iff the word $u^{}_{\bi_1}u^*_{\bj_1}\cdots u^{}_{\bi_m}u^*_{\bj_m}$ reduces to identity in the free group $F_t$, and is 0 otherwise. Moreover, if the word corresponding to the index $(\bi_1,\bj_1, \ldots, \bi_m,\bj_m)$ reduces to identity, then there exists a $*$-pairing $\pi \in \NC_2^*(d,m)$ which matches only variables with the same indices. We call any such $*$-pairing $\pi$ consistent with the $2dm$-tuple $(\bi_1,\bj_1, \ldots, \bi_m,\bj_m)$ and denote this by the indicator function $\ind[\pi, \bi_1,\bj_1, \ldots, \bi_m,\bj_m]$.
    
    The above implies that we may bound 
    \[ \phi(u^{}_{\bi_1}u^*_{\bj_1} \cdots u^{}_{\bi_m}u^*_{\bj_m})  \le \sum_{\pi \in \NC^*_2(d,m)} \ind[\pi, \bi_1,\bj_1, \ldots, \bi_m,\bj_m],\]
    where the inequality occurs because there could be multiple $*$-pairings consistent with a tuple. We thus have that 
    \begin{align*}
         \phi((pp^*)^m) &\le \sum_{\substack{|\bi_1|=\ldots=|\bi_m|=d\\|\bj_1|=\ldots=|\bj_m|=d}} c_{\bi_1}\cdots c_{\bi_m}c_{\bj_1}\cdots c_{\bj_m} \sum_{\pi \in \NC^*_2(d,m)} \ind[\pi, \bi_1,\bj_1, \cdots, \bi_m,\bj_m]\\
         &= \sum_{\pi \in \NC^*_2(d,m)}  \sum_{\substack{|\bi_1|=\ldots=|\bi_m|=d\\|\bj_1|=\ldots=|\bj_m|=d}} c_{\bi_1}\cdots c_{\bi_m}c_{\bj_1}\cdots c_{\bj_m} \ind[\pi, \bi_1,\bj_1, \cdots, \bi_m,\bj_m].
    \end{align*}
    
    If a term corresponding to a fixed $*$-pairing $\pi$ is non-zero, then the list of indices $(\bi_1, \ldots, \bi_m)$ is the same as $(\bj_1, \ldots, \bj_m)$ up to the exact ordering. Let us relabel $(\bi_1, \ldots, \bi_m) = (a_1, \ldots, a_{dm})$ and $(\bj_1, \ldots, \bj_m) = (b_1, \ldots, b_{dm})$ and let $c_{a_1, \ldots, a_{dm}} = c_{\bi_1}\ldots c_{\bi_m}$ and $c_{b_1, \ldots, b_{dm}} = c_{\bj_1}\cdots c_{\bj_m}$. Since $\pi$ gives a non-crossing bijection between the two lists $(a_1, \ldots, a_{dm})$ and $(b_1,\ldots, b_{dm})$, it holds that $c_{b_1, \ldots, b_{dm}} = c_{\pi(a_1),\ldots, \pi(a_{dm})}$. Thus, the above sum is
    \begin{align*}
        \ \phi((pp^*)^m) &\le \sum_{\pi \in \NC^*_2(d,m)}  \sum_{a_1, \ldots, a_{dm}} c_{a_1, \ldots, a_{dm}} c_{\pi(a_1),\ldots, \pi(a_{dm})}\\
        \ &\le \sum_{\pi \in \NC^*_2(d,m)} \left(\sum_{a_1, \ldots, a_{dm}} |c_{a_1, \ldots, a_{dm}}|^2\right)^{1/2}
        \left(\sum_{a_1,\ldots, a_{dm}} |c_{\pi(a_1),\ldots, \pi(a_{dm})}|^2\right)^{1/2},
    \end{align*}    
     where the inequality follows from Cauchy-Schwarz. The two internal summations are exactly the same since the summation is over all $dm$ tuples of indices and $\pi$ is a bijection. Switching back to the old indexing scheme, the internal summation then equals 
     \[  \sum_{a_1, \ldots, a_{dm}} |c_{a_1, \ldots, a_{dm}}|^2 = \sum_{\substack{|\bi_1|=\ldots=|\bi_m|=d}} |c_{\bi_1}\cdots c_{\bi_m}|^2 = \left(\sum_{|\bi|=d} |c_{\bi}|^2\right)^m.\] 
    Overall, we have
     \begin{align*}   
        \  \phi((pp^*)^m) &\le |\NC^*_2(d,m)|\left(\sum_{|\bi|=d} |c_{\bi}|^2\right)^m.
    \end{align*}
    Using \lref{lem:catalan-fuss} to bound the number of $*$-pairings,
    \[ |\NC^*_2(d,m)| = C_{d,m} = \frac{1}{m}\binom{m(d+1)}{m-1} = O\left(\frac{(d+1)^{m(d+1)}}{\left(d+\frac1m\right)^{md+1}}\right).\]
    Thus, taking the $m$-th root in the limit $m \to \infty$ yields
    \[ \|p\|^2 = \lim_{m \to \infty}  \phi((pp^*)^m)^{1/m} = \frac{(d+1)^{d+1}}{d^d}\left(\sum_{|\bi|=d} |c_{\bi}|^2\right) \le e(d+1) \left(\sum_{|\bi|=d} |c_{\bi}|^2\right).\]
    This completes the proof of the theorem.
\end{proof}

\end{document}